\newtheorem{theorem}{Theorem}
\newtheorem{corollary}{Corollary}
\newtheorem{definition}{Definition}
\newtheorem{lemma}{Lemma}
\newtheorem{remark}{Remark}
\begin{document}

\title{Low Complexity Antenna Selection for Low Target Rate Users in Dense Cloud Radio Access Networks}

\author{Jeonghun~Park, \emph{Student Member, IEEE}
 and~Robert W. Heath Jr., \emph{Felow, IEEE}
\thanks{J. Park and R. W. Heath Jr. are with the Wireless Networking and Communication Group (WNCG), Department of Electrical and Computer Engineering, 
The University of Texas at Austin, TX 78701, USA. (E-mail: $\left\{\right.$jeonghun, rheath$\left\}\right.$@utexas.edu)}
\thanks{
This research was supported by a gift from Huawei Technologies Co. Ltd.}
}


\maketitle \setcounter{page}{1} 


\begin{abstract}
We propose a low complexity antenna selection algorithm for low target rate users in cloud radio access networks.
The algorithm consists of two phases: In the first phase, each remote radio head (RRH) determines whether to be included in a candidate set by using a predefined selection threshold. In the second phase, RRHs are randomly selected within the candidate set made in the first phase. 
To analyze the performance of the proposed algorithm, we model RRHs' and users' locations by a homogeneous Poisson point process, whereby the signal-to-interference ratio (SIR) complementary cumulative distribution function is derived. By approximating the derived expression, an approximate optimum selection threshold that maximizes the SIR coverage probability is obtained. Using the obtained threshold, we characterize the performance of the algorithm in an asymptotic regime where the RRH density goes to infinity. 
The obtained threshold is then modified depending on various algorithm options.
A distinguishable feature of the proposed algorithm is that the algorithm complexity 
keeps constant independent to the RRH density, so that a user is able to connect to a network without heavy computation at baseband units.
\end{abstract}

\section{Introduction}
Cloud radio access networks (C-RANs) \cite{cran_whitep, rost:commac} use distributed RF units called remote radio heads (RRHs), which are connected to a centralized baseband processor unit (BBU) cloud via highspeed fronthaul. 
Due to this structure, a C-RAN has an inherent advantage for improving the network throughput. For instance, on the uplink, it would be optimum to decode jointly the transmitted symbols by aggregating all the received data, which is naturally possible in a C-RAN setting. 
One issue of the centralized decoding is that it requires significant computation complexity. For example, if computation resources in the BBU cloud are statically multiplexed, sharing huge amount of data between BBUs may cause a computational outage \cite{valenti:2014:gc}. 
For a user that wants high-rate communication, this shortcoming is worth to endure since it compensates high data rate. 
Nevertheless, focusing on a user that wants only low-rate communication, this huge computation complexity is unnecessary; thereby such user should avoid this.

As an alternative of the centralized decoding, for users which require low-rate
communication, we consider to segment the processing so that one BBU 
decodes the data for a single (low target rate) user. In this setting, a RRH selection switch is considered as illustrated in Fig.~\ref{switch_model}. The role of the RRH selection switch is to select a RRH and connect it to the BBU dedicated for the corresponding user, e.g., user 1 in Fig.~\ref{switch_model}. User 1 indicates a low target rate user. For this reason, instead of a fixed RRH-BBU pair, the considered C-RAN has a reconfigurable fronthaul structure \cite{Sundaresan:2013}, where a BBU can be flexibly connected to the selected RRHs among the distributed RRHs. By doing this, the decoding complexity per user is manageable.

\begin{figure}[t]
\centerline{\resizebox{1\columnwidth}{!}{\includegraphics{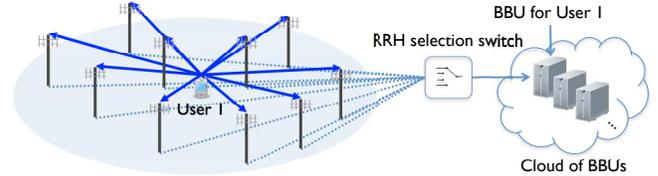}}}   
\caption{The considered uplink C-RAN model. The RRH selection switch selects a RRH and connects it to the BBU dedicated for user 1. User 1 indicates a low target rate user.}  
   \label{switch_model} 
\end{figure}

In the considered C-RAN with a reconfigurable fronthaul, RRH selection is important since it mainly determines the performance. Most prior RRH selection methods demand complexity closely related to the density of the RRH. As a simple example, let's assume that the RRH selection switch selects the nearest RRH to the user. To do this, the RRH selection switch searches all the RRHs and selects the RRH whose the distance is minimum, resulting in that the complexity linearly increases with the RRH density.
When considering a C-RAN with a high RRH density \cite{megen:twc, jun:2014}, this RRH selection method can cause high complexity in the RRH selection switch, where such complexity is not desirable especially for low target rate users.
In this paper, we propose a RRH selection algorithm where
its goal is different from conventional RRH selection algorithms, i.e., keeping reasonable complexity even in dense C-RANs.

\subsection{Related Work}

RRH selection methods in C-RANs was proposed in \cite{liu:2013, shi:2014, fan:2014,  joung:2013, joung:2013gc, joung:2013lett,wan:07}. In \cite{liu:2013}, the downlink sum-rate was characterized as a function of a subset of RRHs 
and based on that a combinatorial optimization problem was formulated to find the optimal subset of RRHs. Similar to \cite{liu:2013}, in \cite{shi:2014}, an optimization problem to select the RRHs was formulated but the optimization goal was minimizing network power consumption. In \cite{fan:2014}, to reduce the complexity caused by estimating instantaneous channel and computing an uplink receiver filter, a channel matrix sparsifying algorithm was proposed for the MMSE receiver. In \cite{joung:2013, joung:2013gc, joung:2013lett}, motivated by the energy efficiency in a large distributed network, energy efficient antenna selection algorithms were proposed.
In \cite{wan:07}, a multi-mode antenna selection algorithm that chooses whether one antenna or multiple antennas for serving one user was proposed.

In another line of research, the signal-to-interference (SIR) coverage probability was characterized when using various cooperation techniques under an assumption of a network modeled by a homogeneous Poisson point process (PPP). For instance, in \cite{mugen:14}, the SIR coverage was analyzed in a uplink C-RAN, where a user is associated with the nearest RRHs.
Assuming a downlink C-RAN where a user is served by multiple RRHs (or base stations (BSs)), the SIR coverage probability was characterized in \cite{NY:dynamic, junzhang:dynamic, zhang:14:spawc, Tanbourgi:2014, 7248946}. Further, by using this characterization, the optimum cluster size was obtained in \cite{NY:dynamic, junzhang:dynamic, zhang:14:spawc}.
In \cite{6909064}, the SIR coverage performance of the rate-splitting with the pair-wise BS cooperation was characterized. 
Considering a multi-tier network, in \cite{6928420, 6879305, 7067349}, a joint transmission method for heterogeneous networks was proposed and the SIR performance was analyzed. 
While the benefits of cooperation was a main topic in \cite{NY:dynamic, junzhang:dynamic, zhang:14:spawc, Tanbourgi:2014, 7248946, 6909064, 6928420, 6879305, 7067349}, \cite{7308016} focused on how each user achieves the benefits of cooperation avoiding the BS conflict problem, which occurs when multiple users want to be served from the same BS.

The main limitation of the existing work \cite{liu:2013, shi:2014, fan:2014, wan:07, joung:2013, joung:2013gc, joung:2013lett, mugen:14, NY:dynamic, 7248946, junzhang:dynamic, zhang:14:spawc, Tanbourgi:2014, 6909064,6928420, 6879305, 7067349} is that a centralized approach is used, where a core processor (e.g., the RRH selection switch in this paper) collects all the information from every RRH such as distances to users or instantaneous channel coefficients for choosing RRHs.
For instance, in \cite{liu:2013}, distances (large scale fading) between RRHs to active users are needed to solve the optimization problem. This approach is not fitted to our aim in a dense C-RAN scenario since the complexity is an increasing function of the RRH density.

\subsection{Contributions}
In this paper, we propose a low complexity RRH selection algorithm. The proposed algorithm consists of two phases. In the first phase, called the distributed selection phase, each RRH compares a distance (or received power) from a user with a predefined selection threshold, and determines whether to be included in a candidate set.
One issue in the first phase is that it is not trivial to extract the required information, e.g., a distance or received power, in each RRH when all the baseband processing such as FFT are placed in the BBU cloud. To resolve this, we assume a LTE channel structure which permits the RRHs to extract the required information from the received signal in the time domain, without performing all the received signal processing found subsequently in the BBU cloud.
In the second phase, called the random selection phase, the RRH selection switch randomly selects RRHs within the candidate set made in the first phase. 
By using two separate phases, the complexity of the algorithm is constant irrespective of the RRH density. 

To analyze the performance of the proposed algorithm, we model a network by using a homogeneous PPP, that allows an expression for the SIR CCDF to be derived in a closed form. Further, for the analytical tractability, we simplify the proposed algorithm so that only one RRH is selected and a distance between each RRH and a user is used in the first phase. Under this assumption, we derive the SIR complementary cumulative distribution function (CCDF) as a function of relevant system parameters, chiefly the selection threshold, the densities of the RRH and the interfering user, the pathloss exponent, and the SIR target.
By approximating the derived SIR CCDF, we find an approximate optimum selection threshold that maximizes the SIR coverage probability.
With the obtained selection threshold, we characterize the SIR coverage probability of the proposed algorithm in an asymptotic regime, and reveal a condition that the relative performance loss caused by the random selection vanishes.
Then, we modify the obtained approximate optimum selection threshold so as to work for a general case of the algorithm, i.e., if multiple RRHs are selected or received power is used. 


The remainder of the paper is organized as follows. In Section II, 
the proposed algorithm and the system model used in the paper are explained. In Section III, an approximate optimum selection threshold is obtained and the performance of the proposed algorithm is characterized.
In Section IV, the obtained selection threshold is modified for various algorithm options and Section V concludes the paper.

\section{System Model }
In this section, we first explain the proposed algorithm and the RRH setting for applying the algorithm in practice. 
Next, we introduce the network model and the RRH selection model for analyzing the performance of the proposed algorithm.

\subsection{RRH Selection Algorithm}

In this subsection, we explain the proposed RRH selection algorithm. 
For applying the algorithm, we consider a general uplink cellular system implemented by a C-RAN, where single-antenna RRHs are distributed and connect to centralized BBUs. 
The location of the $i$-th RRH is denoted as ${\bf{d}}_i$. The set of RRH locations are denoted as $\Phi = \{{\bf{d}}_i, i \in \mathbb{N}\}$.
Single-antenna users transmit the uplink data through the network. We denote 
that the $i$-th user is located at ${\bf{u}}_i$, and the set of the users' locations is $\Phi_{\rm u} = \{{\bf{u}}_i, i \in \mathbb{N}\}$.
We only focus on user 1 located at ${\bf u}_1$ since the algorithm can be applied for each user equivalently. Without loss of generality, we assume ${\bf{ u}}_1 = {\bf{0}}$. This assumption can be generalized easily by shifting the location of each RRH by ${\bf{d}}_i - {\bf{u}}_1 \; \forall i$.
It is worthwhile to mention that the applicability of the proposed algorithm is not restricted by a particular network model, such as a PPP network model.

The proposed algorithm consists of two phases, called the distributed selection phase and the random selection phase, respectively.

\subsubsection{Phase 1-Distributed Selection}
The goal of this phase is to determine a candidate set of RRHs. 
To do this, each RRH compares the distance from user 1 with a predefined selection threshold $R_{\rm th}$. 
Denoting a candidate set as $\CMcal{A}$, a RRH whose a distance from user 1 is less than $R_{\rm th}$ will be in $\CMcal{A}$. In other words, ${\bf{d}}_i \in \CMcal{A}$ if $\left\| {\bf{d}}_i \right\| < R_{\rm th}$. 
Clearly, every RRH included in $\CMcal{A}$ has a distance less then $R_{\rm th}$, i.e., $\mathop {\max} \limits_{{\bf{d}}_i \in \CMcal{A}} \left\| {\bf{d}}_i \right\| < R_{\rm th}$. 
If the selection threshold $R_{\rm th}$ is too small, then all the RRHs' distances are larger than a threshold, i.e., $\left\| {\bf{d}}_i \right\| > R_{\rm th}$ for $\forall {\bf{d}}_i \in \Phi$, the candidate set is empty. In this case, user 1 fails to connect to a RRH and the outage occurs. 
Instead of a distance, each RRH also can use received power. Given a received power threshold $P_{\rm th}$, the RRH whose received power larger than $P_{\rm th}$ will be in $\CMcal{A}$, and otherwise the RRH will not be included in $\CMcal{A}$. We denote that $\left| \CMcal{A} \right| = M$, $M\ge 0$. 
\subsubsection{Phase 2-Random Selection}
In this phase, the RRH selection switch randomly selects RRHs within the candidate set made in the distributed selection phase. A set of selected RRHs in this phase is denoted as $\CMcal{B}$, where $\left| \CMcal{B} \right| = L$, $L \ge 1$ and $M \ge L$. 
For instance, assuming that $\CMcal{A} =\{{\bf{d}}_{1}, \cdots {\bf{d}}_{M} \}$, we have $\CMcal{B} = \{{\bf{d}}_{i_1}, \cdots {\bf{d}}_{i_L} \}$ with probability $1/\binom{M}{L}$ for any $\{i_1,...,i_L\} \subseteq \{1,...,M\}$ since the RRHs are chosen randomly. When only one RRH is selected in the RRH selection switch, i.e., $L=1$, one RRH is selected within $\CMcal{A}$ with probability $1/M$.
Fig.~\ref{algo_des} illustrates the proposed algorithm assuming $M=2$ and $L=1$

One point about the random selection phase is that there is a non-zero possibility that more than two users select the same RRH. When assuming that the RRHs are densely deployed, which is the case we focus on, this probability becomes small. 

In the distributed selection phase, since each of RRH performs the comparison in a distributed way, no centralized processing is required. For this reason, 
the complexity is independent to the density of the RRH. In the random selection phase, the RRH selection switch randomly selects a RRH within the candidate set, so that the complexity is also independent to the density of the RRH.
To show this specifically, we consider a simple random delay method that can be used in the random selection phase. If the RRH is included in $\CMcal{A}$ in the distributed selection phase, the RRH sends a predefined $1$-bit symbol through the optical fiber to the RRH selection switch, otherwise sends nothing. Before sending the signal, the RRH generates a random delay and sends the symbol after the generated delay time. Then, the RRH selection switch selects the RRH whose the sent symbol arrives at the first time. By doing this, the RRH selection switch is able to complete the RRH selection within a constant time. 

For more clarification, we compare the random selection phase and the nearest RRH selection. We assume that the same distributed selection phase is used for refining the candidate set $\CMcal{A}$, but in the nearest RRH selection, the RRH selection switch chooses the RRH whose the distance is the minimum in $\CMcal{A}$.
To do this, the RRHs included in $\CMcal{A}$ sends the measured distance after quantizing it through the optical fiber. Here, the required number of quantization bit is obviously more than $1$. Then, the RRH selection switch collects all the distance information sent from each RRH and finds the minimum, which needs $\left| \CMcal{A} \right|$ complexity. If the cardinality of $\CMcal{A}$ increases, the complexity for choosing the RRH should also increase.
On the contrary to this, in the random selection, the selection switch does not have to collect all the distance information from each RRH in $\CMcal{A}$, which makes the selection complexity constant irrespective of $\left| \CMcal{A} \right|$. 
Finally, the proposed algorithm is able to select a RRH  for each user with a constant complexity.

\begin{figure}[t]
\centerline{\resizebox{0.8\columnwidth}{!}{\includegraphics{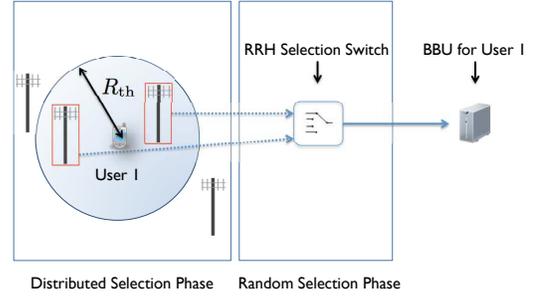}}}   
\caption{Illuratration of the proposed RRH selection algorithm when distance information is used. We assume that $M=2$, $L=1$. In the distributed selection phase, the candidate set is determined. In the figure, the RRHs marked by the red rectangle are included in $\CMcal{A}$. In the random selection phase the RRH selection switch randomly chooses one RRH in $\CMcal{A}$ with probability $1/2$.}  
   \label{algo_des} 
\end{figure}

\subsection{RRH Setting}
Since the proposed algorithm requires a distance or received power, each RRH should extract this information from the signals it receives. When all the digital processing such as the FFT as used in 3GPP LTE are placed in the BBU cloud, however, it is not clear that how each RRH can extract the information required for the proposed algorithm. For example, without the FFT, each RRH should obtain the required information from only the time domain signals.
In this subsection, we explain how each RRH obtains the required information by using the characteristic of the LTE channel structure.
Before data transmission, a user sends the random access preamble signal generated from the Zadoff-Chu sequence through the physical random access channel (PRACH) for initial access. There can be two kinds of interference to this preamble signal. The first one is from users that are actively communicating with their selected RRHs. These signals are transmitted through the physical uplink shared channel (PUSCH) \cite{ghosh:lte}. Conventionally, this can be eliminated easily by using the FFT due to the orthogonal property of the OFDM, though this cannot be applied due to the lack of the FFT in each RRH. The second one is from users that are transmitting other preamble signals through the physical random access channel (PRACH). 
 
We first remove the interference on the PUSCH. Since the PUSCH and the PRACH are defined to be separately placed in the frequency domain \cite{ghosh:lte}, each RRH uses bandpass filter (BPF) that only allows to pass the frequency band corresponding to the PRACH. Then, the signals on the PUSCH are removed. 
Through this method, the only remaining signals are the preamble signals transmitted on the PRACH. The characteristic of the preamble signals is that they are generated from Zadoff-Chu sequence, and each user has different root of the sequence. Two key properties of Zadoff-Chu sequence are as follows: It is a constant amplitude zero auto-correlation (CAZAC) sequence, and it preserves its property of a CAZAC sequence in both of the time and frequency domain. Due to these properties, the preamble signals on PRACH from the different users 
have zero cross correlation each other in the time domain \cite{ghosh:lte}. For this reason, each RRH discriminates a preamble signal transmitted from a particular user by applying the conventional technique as in the LTE standard, i.e., multiplying the preamble signals with the Zadoff-chu sequence assigned to the particular user.
As a result, the only remaining signal is the preamble signal transmitted from the particular user due to the zero cross correlation property. 
Then the RRH extracts the required information from the remaining signal.
Fig.~\ref{bpf_zadoff} describes the whole procedures of how to obtain the required information without using the FFT at each RRH. 

\begin{figure}[t] 
\centerline{\resizebox{0.89\columnwidth}{!}{\includegraphics{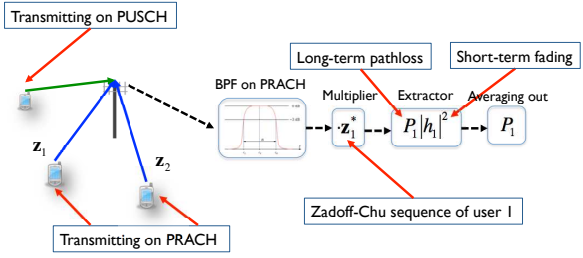}}}    
\caption{The description of how to extract the required information without using frequency domain processing. Eliminating the signals on the PUSCH by using the bandpass filter, and discriminate the preamble signals by multiplying Zadoff-Chu sequence assigned to each user. By repeating this process and averaging the received power, each RRH obtains the averaged received power.}
\label{bpf_zadoff}
\end{figure}

Now we explain the model mainly used for analyzing the performance of the proposed algorithm.

\subsection{Network Model}
We consider a network modeled by a homogeneous PPP, so that a RRH's location ${\bf{d}}_i\;\forall i$ is distributed according to a homogeneous PPP with density $\lambda$. Each user's location ${\bf{u}}_i\;\forall i$ is also distributed as a homogeneous PPP with density $\lambda_{\rm u}$. 
We do not assume power control as this would depend on the RRH that is eventually selected.

\subsection{RRH Selection Model}
Henceforth, for analytical tractability, we assume that the proposed algorithm uses a distance in the distributed selection phase and only one RRH is selected ($L=1$). This assumption is generalized later. Once the RRH is selected, it is connected to the BBU dedicated for the user 1, and the BBU decodes the uplink data symbol by using the received signal at the selected RRH. Extensions are possible to support MIMO, e.g., to use multiple co-located RRHs from the tower. 

\subsection{Signal Model}
The BBU performs single user detection by treating other users' interference as noise \cite{geng:tin}. This is a reasonable assumption since we assume that one BBU is allocated for handling one user. 
Thanks to the property of a homogeneous PPP, we are able to assume that ${\bf{u}}_1 = {\bf{0}}$ without loss of generality. Denoting the index of the selected RRH for user 1 as ${\rm s}$, the received signal at the selected RRH is given by
\begin{align} \label{def_sig}
y_{\rm s} = \left\| {\bf{d}}_{\rm s} \right\|^{-\beta/2} h_{{\rm s},1} x_1 + \sum_{{\bf{u}}_i \in \Phi_{\rm u} \backslash {\bf{u}}_1} {
\left\| {\bf{d}}_{\rm s} - {\bf{u}}_i \right\|^{-\beta/2}h_{{\rm s},i} x_i} + z_{\rm s},
\end{align}
where ${\bf{d}}_{\rm s}$ is a location of the selected RRH , $h_{i,j} \sim \CMcal{CN}\left(0, 1\right)$ is a Rayleigh fading coefficient from user $j$ to RRH $i$, and $z_{\rm s} \sim {\CMcal{CN}} \left(0, \sigma^2 \right)$ is additive white Gaussian noise and 
$\beta$ is the pathloss exponent. The uplink symbol transmitted from user $i$ is indicated by $x_i$, whose $\mathbb{E}\left[ \left| x_i\right|^2\right] = 1$.  

Now we define the instantaneous SIR CCDF. 
Denoting $H_{i,j} = \left| h_{i,j} \right|^2$, the instantaneous SIR CCDF is defined as 
\begin{align} \label{def_sir}
{\mathtt{P}}\left(\theta, \lambda, \lambda_{\rm u}, \beta \right) = \mathbb{P}\left[ \frac{\left\| {\bf{d}}_{\rm s} \right\|^{-\beta} H_{\rm s, 1}}{\sum_{{\bf{u}}_i \in \Phi_{\rm u}\backslash {\bf{u}}_1 }   \left\| {\bf{d}}_{\rm s} - {\bf{u}}_i \right\|^{-\beta} H_{{\rm s}, i} } > \theta \right],
\end{align}
where $\theta$ is the SIR target. The noise term is neglected for analytical tractability. The noise term can be incorporated with more complicated calculations, but it makes it hard to devise intuition from the expression.

\section{Performance Characterization}
In this section, we provide analytical results on the performance of the proposed algorithm. At first, we characterize the SIR coverage probability. Then we optimize a predefined selection threshold by using the obtained SIR coverage expression. Finally, we characterize the performance of the proposed algorithm in an asymptotic regime.

\subsection{SIR CCDF Characterization}
In this subsection, we derive the SIR CCDF if a RRH is selected by using the proposed selection algorithm. First, we obtain the probability density function (PDF) of the $\left\| {\bf{d}}_{\rm s} \right\|$ in the following Lemma.
\begin{lemma} \label{lem_pdf}
Given the selection threshold $R_{\rm th}$, the PDF of the random variable $\left\| {\bf{d}}_{\rm s} \right\|$ is 
\begin{align}
f_{\left\| {\bf{d}}_{\rm s}\right\|}^{R_{\rm th}}\left(r \right) = \frac{2r}{R_{\rm th}^2} ,\;{\rm for}\;0< r< R_{\rm th}.
\end{align}
\end{lemma}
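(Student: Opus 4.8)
The plan is to exploit the defining conditional property of a homogeneous PPP: given that exactly $M$ points fall in a bounded region, those points are independent and uniformly distributed over that region. Here the relevant region is the disk $\CMcal{D} = \{ {\bf{x}} : \left\| {\bf{x}} \right\| < R_{\rm th}\}$, and by construction $\CMcal{A}$ consists precisely of the RRHs lying in $\CMcal{D}$, so $M = \left| \CMcal{A}\right|$ is Poisson with mean $\lambda \pi R_{\rm th}^2$. Since a distance is assigned to the selected RRH only when $M \ge 1$, I would work conditionally on $M \ge 1$ throughout.

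First I would condition on $M = m$ for a fixed $m \ge 1$. By the conditional uniformity of the PPP, the $m$ RRH locations in $\CMcal{A}$ can be written as i.i.d.\ points ${\bf{X}}_1, \dots, {\bf{X}}_m$, each uniform on $\CMcal{D}$. In the random selection phase with $L = 1$, the switch picks the index of the selected RRH uniformly from $\{1, \dots, m\}$, independently of the point locations. By exchangeability, the selected location ${\bf{X}}_I$ then has exactly the same law as ${\bf{X}}_1$, i.e.\ it is uniform on $\CMcal{D}$, and crucially this law does not depend on the particular value of $m$.

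Because the conditional distribution of ${\bf{d}}_{\rm s}$ given $M = m$ is the same (uniform on $\CMcal{D}$) for every $m \ge 1$, averaging over the conditional law of $M$ given $M \ge 1$ leaves it unchanged: ${\bf{d}}_{\rm s}$ is uniform on $\CMcal{D}$. It then remains to convert this planar uniform law into the radial PDF. A point uniform on $\CMcal{D}$ has constant planar density $1/(\pi R_{\rm th}^2)$; integrating over the circle of radius $r$ contributes the circumference factor $2\pi r$, so $f_{\left\| {\bf{d}}_{\rm s}\right\|}^{R_{\rm th}}(r) = 2\pi r / (\pi R_{\rm th}^2) = 2r / R_{\rm th}^2$ on $0 < r < R_{\rm th}$, as claimed.

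I do not anticipate a genuine obstacle here; the only point requiring care is the reduction that removes the dependence on $M$. The exchangeability argument is what makes the random (rather than nearest) selection yield a clean $m$-independent answer, and I would state it explicitly rather than silently assume that selecting among $M$ uniform points returns a uniform point. Note that the resulting PDF is indeed independent of both $\lambda$ and the selection mechanism's randomization, which is the feature that keeps the subsequent SIR analysis tractable.
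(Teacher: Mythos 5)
Your proof is correct and follows essentially the same route as the paper: conditional uniformity of the PPP points given the count in $\CMcal{B}(0,R_{\rm th})$, followed by the observation that uniform random selection preserves the uniform law, yielding the radial density $2r/R_{\rm th}^2$. You are somewhat more explicit than the paper about the exchangeability step and you condition on $M\ge 1$ directly rather than marginalizing and renormalizing, but these are presentational differences, not a different argument.
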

\begin{proof}
Denote the number of RRHs inside the closed set $S \subseteq \mathbb{R}^2$ as $N\left(S \right)$.
When $N\left(\CMcal{B}\left(0, R_{\rm th} \right) \right) = K$, the conditional PDF of $\left\| {\bf{d}}_{\rm s}\right\|$ is
\begin{align} \label{lem_pdf_1}
f_{\left\| {\bf{d}}_{\rm s}\right\|}^{R_{\rm th}}\left(\left. r \right| N\left(\CMcal{B}\left(0, R_{\rm th} \right) \right) = K \right) = \frac{2r}{R_{\rm th}^2} ,\;{\rm for}\;0< r< R_{\rm th}.
\end{align}
This is because in a homogeneous PPP conditioned on the number of points in $\CMcal{B}\left(0, R_{\rm th} \right)$, points in $\CMcal{B}\left(0, R_{\rm th} \right)$ are independently and uniformly distributed in the bounded set $\CMcal{B}\left(0, R_{\rm th} \right)$. Marginalizing \eqref{lem_pdf_1} for $K$, 
\begin{align} \label{pdf_d_s_not_normalized}
&f_{\left\| {\bf{d}}_{\rm s}\right\|}^{R_{\rm th},{\rm Not-normalized}}\left(r \right) \nonumber \\
& = \mathbb{E}\left[f_{\left\| {\bf{d}}_{\rm s}\right\|}^{R_{\rm th}}\left(\left. r \right| N\left(\CMcal{B}\left(0, R_{\rm th} \right) \right) = K \right) \right]\nonumber \\
&= \frac{2r}{R_{\rm th}^2} \sum_{K = 1}^{\infty} \frac{\left(\lambda \pi R_{\rm th}^2 \right)^K}{K !} e^{-\lambda \pi R_{\rm th}^2} \nonumber \\
& \mathop = \limits^{(a)} \frac{2r}{R_{\rm th}^2}(1-e^{-\lambda \pi R_{\rm th}^2}),\;{\rm for}\;0< r< R_{\rm th},
\end{align}
where (a) follows that 
\begin{align}
&\sum_{K=1}^{\infty} \mathbb{P}\left[N\left(\CMcal{B}\left(0, R_{\rm th} \right) \right)=K \right] = \nonumber \\
&\sum_{K = 0}^{\infty} \frac{\left(\lambda \pi R_{\rm th}^2 \right)^K}{K !} e^{-\lambda \pi R_{\rm th}^2}  - e^{-\lambda \pi R_{\rm th}^2} = 1 - e^{-\lambda \pi R_{\rm th}^2},
\end{align}
by the second axiom of probability. Normalizing \eqref{pdf_d_s_not_normalized} so that the total probability is equal to $1$, we have
\begin{align}
f_{\left\| {\bf{d}}_{\rm s}\right\|}^{R_{\rm th}}\left(r \right) & = \frac{2r}{R_{\rm th}^2},\;{\rm for}\;0< r< R_{\rm th}.
\end{align}
\end{proof} 
Leveraging Lemma \ref{lem_pdf}, the SIR CCDF \eqref{def_sir} is derived in Theorem \ref{theo_sirccdf}. 
\begin{theorem} \label{theo_sirccdf}
Given the selection threshold $R_{\rm th}$, the instantaneous SIR CCDF is 
\begin{align} \label{th_lowccdf_claim}
&{\mathtt{P}}\left(R_{\rm th}, \theta, \lambda, \lambda_{\rm u}, \beta \right)  \nonumber \\
& = \left(1 - e^{-\lambda \pi R_{\rm th}^2} \right) \frac{\left( 1 - e^{-\pi \lambda_{\rm u} \theta^{2/\beta} \frac{1}{{\rm sinc} \left(2/\beta \right)}R_{\rm th}^2  }\right)}{\pi \lambda_{\rm u} \theta^{2/\beta} \frac{1}{{\rm sinc} \left(2/\beta \right)} R_{\rm th}^2}.
\end{align}
\end{theorem}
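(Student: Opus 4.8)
The plan is to factor the SIR CCDF as the probability that the candidate set $\CMcal{A}$ is nonempty, times the conditional coverage probability given that a RRH is actually selected, and then to evaluate the latter by the standard Rayleigh-fading Laplace-transform argument. Since the number of RRHs in $\CMcal{B}(0,R_{\rm th})$ is Poisson with mean $\lambda \pi R_{\rm th}^2$, the void (outage) probability is $e^{-\lambda \pi R_{\rm th}^2}$, which supplies the prefactor $(1 - e^{-\lambda \pi R_{\rm th}^2})$. I would then condition on $\|\mathbf{d}_{\rm s}\| = r$, whose law is the normalized density $2r/R_{\rm th}^2$ on $(0,R_{\rm th})$ from Lemma~\ref{lem_pdf}. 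Because the RRH process $\Phi$ and the user process $\Phi_{\rm u}$ are independent, conditioning on the selected RRH leaves the interferers undisturbed, and by Slivnyak's theorem $\Phi_{\rm u}\setminus\{\mathbf{u}_1\}$ remains a homogeneous PPP of intensity $\lambda_{\rm u}$.

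Writing the aggregate interference as $I = \sum_{\mathbf{u}_i \in \Phi_{\rm u}\setminus\{\mathbf{u}_1\}} \|\mathbf{d}_{\rm s} - \mathbf{u}_i\|^{-\beta} H_{{\rm s},i}$ and using $H_{{\rm s},1} \sim {\rm Exp}(1)$ (since $h_{{\rm s},1} \sim \CMcal{CN}(0,1)$), the conditional coverage probability becomes
\begin{align*}
\mathbb{P}[\,H_{{\rm s},1} > \theta r^{\beta} I \mid r\,] = \mathbb{E}_I\!\left[ e^{-\theta r^{\beta} I} \right] = \mathcal{L}_I(\theta r^{\beta}),
\end{align*}
i.e. the Laplace transform of $I$ at $s = \theta r^{\beta}$. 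The exponential tail of $H_{{\rm s},1}$ is precisely what converts the tail probability into a clean Laplace transform; this reduction is the conceptual heart of the proof.

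Next I would evaluate $\mathcal{L}_I$ via the probability generating functional of the PPP. By stationarity the shifted process $\{\mathbf{d}_{\rm s} - \mathbf{u}_i\}$ is again a PPP of intensity $\lambda_{\rm u}$, so after averaging the fading coefficients $H_{{\rm s},i} \sim {\rm Exp}(1)$,
\begin{align*}
\mathcal{L}_I(s) = \exp\!\left( -\lambda_{\rm u}\!\int_{\mathbb{R}^2} \frac{s}{\|x\|^{\beta} + s}\, dx \right) = \exp\!\left( -2\pi \lambda_{\rm u}\!\int_0^{\infty} \frac{s\,v}{v^{\beta}+s}\, dv \right).
\end{align*}
The main obstacle is this radial integral. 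I would reduce it by the substitution $w = v^{\beta}/s$ to the Beta-type integral $\int_0^{\infty} \frac{t^{2/\beta - 1}}{1+t}\, dt = \pi/\sin(2\pi/\beta)$, valid for $\beta > 2$ (which is exactly where the interference is finite), giving $\int_0^{\infty} \frac{s\,v}{v^{\beta}+s}\, dv = \frac{\pi}{\beta \sin(2\pi/\beta)}\, s^{2/\beta}$. Recognizing that ${\rm sinc}(2/\beta) = \beta \sin(2\pi/\beta)/(2\pi)$ collapses the exponent to $\mathcal{L}_I(s) = \exp\!\big(-\pi \lambda_{\rm u}\, s^{2/\beta}/{\rm sinc}(2/\beta)\big)$; substituting $s = \theta r^{\beta}$ makes the exponent $-\pi \lambda_{\rm u} \theta^{2/\beta} r^2/{\rm sinc}(2/\beta)$, which is Gaussian in $r$.

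Finally I would integrate this conditional coverage against the density $2r/R_{\rm th}^2$. Setting $c = \pi \lambda_{\rm u} \theta^{2/\beta}/{\rm sinc}(2/\beta)$ and using $\int_0^{R_{\rm th}} 2r\, e^{-c r^2}\, dr = (1 - e^{-c R_{\rm th}^2})/c$ yields the conditional coverage $(1 - e^{-c R_{\rm th}^2})/(c R_{\rm th}^2)$; multiplying by the nonempty-set prefactor $(1 - e^{-\lambda \pi R_{\rm th}^2})$ reproduces \eqref{th_lowccdf_claim}. The only genuine work is the radial integral together with the bookkeeping that recasts $\sin(2\pi/\beta)$ as the ${\rm sinc}$ factor; the remaining steps are routine.
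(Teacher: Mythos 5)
Your proposal is correct and follows essentially the same route as the paper's proof: factoring out the void probability $(1-e^{-\lambda\pi R_{\rm th}^2})$, converting the tail probability into a Laplace transform via $H_{{\rm s},1}\sim{\rm Exp}(1)$, invoking stationarity and Slivnyak's theorem, and integrating $\mathcal{L}_I(\theta r^{\beta})$ against the density $2r/R_{\rm th}^2$ from Lemma~\ref{lem_pdf}. The only difference is that you derive the Laplace functional explicitly through the probability generating functional and the Beta-type integral, whereas the paper simply cites the known closed form; your derivation (including the $\beta>2$ caveat and the ${\rm sinc}$ bookkeeping) is correct.
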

\begin{proof}
Since the outage occurs when a candidate set $\CMcal{A}$ is empty, we only consider the case that $\CMcal{A}$ is not empty. The SIR \eqref{def_sir} is rewritten as
\begin{align} \label{th_sir_1}
& {\mathtt{P}}\left(R_{\rm th},\theta,  \lambda, \lambda_{\rm u}, \beta \right) 
\nonumber \\
& = \mathbb{P}\left[ \CMcal{A} \ne \emptyset \right]
\mathbb{P} \left[\left. \frac{ \left\| {\bf{d}}_{\rm s} \right\|^{-\beta}H_{\rm s,1}}{\sum_{{\bf{u}}_i \in \Phi_{\rm u} \backslash {\bf{u}}_1 }   \left\| {\bf{d}}_{\rm s} - {\bf{u}}_i \right\| ^{-\beta} H_{{\rm s}, i}} > \theta \right| \CMcal{A} \neq \emptyset \right]
\nonumber \\
& \mathop = \limits^{} \mathbb{P}\left[ \CMcal{A} \ne \emptyset \right]
\mathbb{P} \left[ H_{{\rm s},1} > \left\| {\bf{d}}_{\rm s} \right\|^{\beta}\theta  \sum_{{\bf{u}}_i \in \Phi_{\rm u} \backslash {\bf{u}}_1}   \left\| {\bf{d}}_{\rm s} - {\bf{u}}_i \right\| ^{-\beta} H_{{\rm s}, i} \right]
\nonumber \\
& \mathop = \limits^{(a)} \mathbb{P}\left[ \CMcal{A} \ne \emptyset \right] \mathbb{E}\left[e^{ - \left\| {\bf{d}}_{\rm s} \right\|^{\beta} \theta \sum_{{\bf{u}}_i \in \Phi_{\rm u}  \backslash {\bf{u}}_1 }   \left\| {\bf{d}}_{\rm s} - {\bf{u}}_i \right\| ^{-\beta} H_{{\rm s}, i} } \right]
\nonumber \\
&\mathop = \limits^{(b)} \left(1 - e^{-\lambda \pi R_{\rm th}^2} \right) \mathbb{E}\left[e^{ - \left\| {\bf{d}}_{\rm s} \right\|^{\beta} \theta \sum_{{\bf{u}}_i \in \Phi_{\rm u} \backslash {\bf {u}}_1}   \left\| {\bf{d}}_{\rm s} - {\bf{u}}_i \right\| ^{-\beta} H_{{\rm s}, i} } \right],
\end{align}
where (a) comes from that $H_{{\rm s}, i}$ for $i \in \mathbb{N}$ follows the exponential distribution with unit mean and (b) follows 
\begin{align}
\mathbb{P}\left[ \CMcal{A} \ne \emptyset \right] &= 1 - \mathbb{P}\left[N\left(\CMcal{B}\left(0, R_{\rm th} \right) \right) = 0\right] \nonumber \\
&= \left(1 - e^{-\lambda \pi R_{\rm th}^2} \right).
\end{align}
We now calculate the expectation in \eqref{th_sir_1}. 
First,
\begin{align} \label{th1_sir_2}
&\mathbb{E}\left[e^{-\left\| {\bf{d}}_{\rm s} \right\|^{\beta} \theta \sum_{{\bf{u}}_i \in \Phi_{\rm u} \backslash {\bf{u}}_1}   \left\| {\bf{d}}_{\rm s} - {\bf{u}}_i \right\| ^{-\beta} H_{{\rm s}, i} } \right] \nonumber \\
&= \mathbb{E}_{{\bf{d}}_{\rm s}}\left[\mathbb{E}\left[\left.e^{-\left\| {\bf{d}}_{\rm s} \right\|^{\beta} \theta \sum_{{\bf{u}}_i \in \Phi_{\rm u} \backslash {\bf{u}}_1}   \left\| {\bf{d}}_{\rm s} - {\bf{u}}_i \right\| ^{-\beta} H_{{\rm s}, i} } \right| {\bf{d}}_{\rm s} \right] \right] \nonumber \\
& \mathop = \limits^{(a)} \mathbb{E}_{{\bf{d}}_{\rm s}}\left[\mathbb{E}\left[\left.e^{-\left\| {\bf{d}}_{\rm s} \right\|^{\beta} \theta \sum_{{\bf{u}}_i \in \Phi_{\rm u} \backslash {\bf{u}}_1 }   \left\| {\bf{u}}_i \right\| ^{-\beta} H_{{\rm s}, i} } \right| {\bf{d}}_{\rm s} \right] \right] \nonumber \\
& \mathop = \mathbb{E}_{{\bf{d}}_{\rm s}}\left[ \CMcal{L}_{I} \left(\left\| {\bf{d}}_{\rm s} \right\|^{\beta}\theta \right)\right].
\end{align}
where (a) follows the stationarity of a homogeneous PPP and Slivnyak's theorem \cite{stoyan:book}.
$\CMcal{L}_{ I}  \left(s\right)$ is the Laplace functional of $I$, where $I = \sum_{{\bf{u}}_i \in \Phi_{\rm u} \backslash {\bf{u}}_1 }   \left\|  {\bf{u}}_i \right\| ^{-\beta} H_{{\rm s}, i} $. $\CMcal{L}_{ I}  \left(s\right)$ is derived as 
\begin{align} \label{laplace_I}
\CMcal{L}_I\left(s \right) =  \exp\left(-\pi \lambda_{\rm u} s^{2/\beta}\frac{1}{{\rm sinc}\left(2/\beta \right)} \right),
\end{align}
where the detailed proof is in \cite{baccelli:06}. Plugging \eqref{laplace_I} into \eqref{th1_sir_2}, 
\begin{align}
&\mathbb{E}_{{\bf{d}}_{\rm s}}\left[ \CMcal{L}_{ I  } \left(\left\| {\bf{d}}_{\rm s} \right\|^{\beta}\theta \right)\right] \nonumber \\
&= \mathbb{E}\left[\exp\left(-\pi \lambda_{\rm u} \theta^{2/\beta} \left\| {\bf{d}}_{\rm s} \right\| ^2\frac{1}{{\rm sinc}\left(2/\beta \right)} \right) \right] \nonumber \\
& \mathop = \limits^{(a)} \int_{0}^{R_{\rm th}} \exp\left(-\pi \lambda_{\rm u} \theta^{2/\beta} \left\| {\bf{d}}_{\rm s} \right\| ^2\frac{1}{{\rm sinc}\left(2/\beta \right)} \right) \frac{2r}{R_{\rm th}}{\rm d} r \nonumber \\
& \mathop = \limits^{} \frac{1 - \exp\left(-\pi \lambda_{\rm u} \theta^{2/\beta} \frac{1}{{\rm sinc} \left(2/\beta \right)}R_{\rm th}^2  \right)}{\pi \lambda_{\rm u} \theta^{2/\beta} \frac{1}{{\rm sinc} \left(2/\beta \right)} R_{\rm th}^2},
\end{align}
where (a) comes from Lemma \ref{lem_pdf}. This completes the proof.
\end{proof}
\begin{figure}[t] 
\centerline{\resizebox{0.8\columnwidth}{!}{\includegraphics{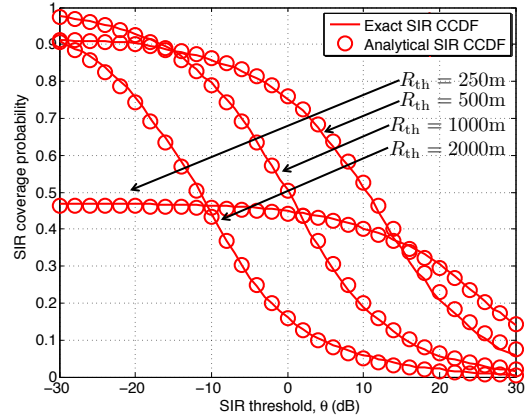}}}   
\caption{The SIR CCDF verification when $\lambda = 10^{-5}/\pi$, $\lambda_{\rm u} = 10^{-6}/\pi$, and $\beta = 4$.
}
\label{fig_verify}
\end{figure}

The obtained SIR CCDF is verified in Fig.~\ref{fig_verify}. As observed, the derived SIR CCDF tightly matches with the exact SIR CCDF obtained by Monte-Carlo simulations over entire range of $\theta$.
Fig.~\ref{fig_verify} also gives intuition of how the selection threshold $R_{\rm th}$ affects the SIR coverage performance. Applying the proposed selection algorithm, there are two cases of outage. The first case is when no RRH is in the candidate set, i.e., $\CMcal{A} = \emptyset$. The second case is when the SIR is lower than $\theta$.
Now we see examples for each case of outage depending on $R_{\rm th}$.
When $R_{\rm th} = 250{\rm m}$, the SIR CCDF has a plateau when $\theta < 0{\rm dB}$. This is mainly because the possibility of the event $\CMcal{A} = \emptyset$ is too high, therefore the SIR CCDF is dominated by the first case of outage. In contrast, with $R_{\rm th} = 2000{\rm m}$, the selected RRH is likely to be far from the user since the selection threshold is too large, resulting in that the SIR coverage performance degrades severely when $\theta$ increases, i.e., the SIR CCDF is dominated by the second case of outage. This observation implies that the selection threshold should be optimized depending on the system parameters, e.g., $\lambda$, $\lambda_{\rm u}$, $\theta$, and $\beta$.

\subsection{Selection Threshold Optimization}

\begin{figure}[t] 
\centerline{\resizebox{0.8\columnwidth}{!}{\includegraphics{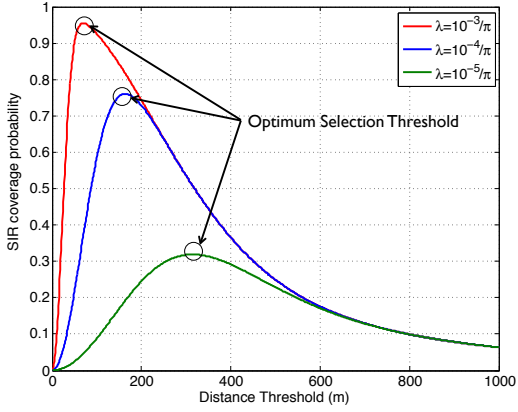}}}    
\caption{The SIR coverage probability with parameter sweaping for $R_{\rm th}$. It is assumed that $\lambda_{\rm u} = 10^{-5}/\pi$, $\beta = 4$, and $\theta = 0{\rm dB}$.
}
\label{para_sweep}
\end{figure}

In this subsection, we derive an approximate optimal selection threshold $\tilde{R}_{\rm th}^{\star}$ to maximize the SIR coverage performance given system parameters $\theta, \lambda$, and $\lambda_{\rm u}$. 
For intuition, we first illustrate the SIR coverage performance depending on $R_{\rm th}$ in Fig.~\ref{para_sweep}. As observed in Fig.~\ref{para_sweep}, the optimum selection threshold $R_{\rm th}^{\star}$ exists, and also the SIR coverage performance has a sharp shape around the $R_{\rm th}^{\star}$ especially when the RRHs are densely deployed, so that there can be significant performance loss when using wrong $R_{\rm th}$. 

Obtaining the exact $R_{\rm th}^{\star} = \arg \max \mathtt{P}\left(R_{\rm th} \right)$, however, is challenging. Specifically, there is no closed form solution satisfying 
\begin{align}
\frac{\partial \mathtt{P}\left(R_{\rm th} \right)}{\partial R_{\rm th}} = 0.
\end{align}
For this reason, we rather use an approximate SIR CCDF to obtain the optimum selection threshold.
Lemma \ref{lem_approx} gives an approximation of $\mathtt{P}\left(R_{\rm th} \right)$.
\begin{lemma} \label{lem_approx}
The SIR CCDF \eqref{th_lowccdf_claim} is approximated by 
\begin{align}
&\tilde {\mathtt{P}}\left(R_{\rm th},\theta,  \lambda, \lambda_{\rm u}, \beta \right)  \nonumber \\
&= \frac{\lambda \pi R_{\rm th}^2}{\left(1 + \lambda \pi R_{\rm th}^2\right)\left(1+\pi \lambda_{\rm u} \theta^{2/\beta}\frac{1}{{\rm sinc}\left(2/\beta \right)} R_{\rm th}^2\right)}.
\end{align}
\end{lemma}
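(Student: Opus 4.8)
The plan is to recognize that the two factors of the exact CCDF in \eqref{th_lowccdf_claim} share the same elementary structure, so that a single approximation handles both. First I would introduce the dimensionless abbreviations $a = \lambda \pi R_{\rm th}^2$ and $b = \pi \lambda_{\rm u}\theta^{2/\beta}\frac{1}{{\rm sinc}(2/\beta)}R_{\rm th}^2$, which rewrites Theorem~\ref{theo_sirccdf} compactly as $\mathtt{P} = (1 - e^{-a})\,\frac{1 - e^{-b}}{b}$ and the claimed target as $\tilde{\mathtt{P}} = \frac{a}{(1+a)(1+b)}$.

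The heart of the argument is the single approximation
\[
1 - e^{-x} \approx \frac{x}{1+x}, \qquad x \ge 0 .
\]
I would justify it by observing that both sides vanish at $x=0$, increase monotonically to $1$ as $x\to\infty$, and agree to first order in $x$ near the origin. Moreover, since $e^{x} > 1+x$ for $x>0$ implies $e^{-x} < \frac{1}{1+x}$, a one-line rearrangement gives $\frac{x}{1+x} \le 1 - e^{-x}$; thus the approximant is in fact a lower bound that is exact at the two endpoints $x=0$ and $x\to\infty$.

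Applying this with $x=a$ yields $1 - e^{-a} \approx \frac{a}{1+a}$ for the first factor. For the second factor I would invoke the same approximation with $x=b$ and divide through by $b$, giving $\frac{1 - e^{-b}}{b} \approx \frac{1}{1+b}$. Multiplying the two approximations then produces
\[
\tilde{\mathtt{P}} = \frac{a}{1+a}\cdot\frac{1}{1+b} = \frac{\lambda \pi R_{\rm th}^2}{\left(1+\lambda \pi R_{\rm th}^2\right)\left(1+\pi \lambda_{\rm u}\theta^{2/\beta}\frac{1}{{\rm sinc}(2/\beta)}R_{\rm th}^2\right)},
\]
which is exactly the claimed expression.

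The main obstacle is not the algebra but arguing that the approximation is faithful enough to support the threshold optimization carried out afterward, since no single small parameter makes it uniformly tight over the whole range of $R_{\rm th}$. I would lean on the endpoint-matching and monotonicity properties above---emphasizing that in the dense regime of interest $a = \lambda\pi R_{\rm th}^2 \to \infty$, so the first factor becomes exact---and defer a quantitative accuracy assessment to the numerical verification already presented in the figures.
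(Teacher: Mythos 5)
Your proposal is correct and follows essentially the same route as the paper: the paper likewise applies the single approximation $1-e^{-x}\approx \frac{x}{1+x}$ (derived there via the Taylor expansion $e^{-x} = 1/(1+x+x^2/2!+\cdots)\approx 1/(1+x)$) to both factors of \eqref{th_lowccdf_claim} and multiplies the results. Your added observations that the approximant is a lower bound exact at both endpoints, and that the first factor becomes exact in the dense regime, are a modest strengthening of the paper's purely heuristic justification but do not change the argument.
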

\begin{proof}
From the Taylor expansion of the exponential function $e^x = 1 + x/1! + x^2/2! + \cdots$, we have
\begin{align}
e^{-x} = \frac{1}{1 + x/1 + x^2/2! + \cdots} \approx \frac{1}{1+x}.
\end{align}
Using this approximation, the first part of the SIR CCDF \eqref{th_lowccdf_claim} is approximated as
\begin{align} \label{lem_approx_1}
1 - e^{-\lambda \pi R_{\rm th}^2} \approx \frac{\lambda \pi R_{\rm th}^2}{1 + \lambda \pi R_{\rm th}^2},
\end{align}
and the second part of \eqref{th_lowccdf_claim} is also approximated as
\begin{align} \label{lem_approx_2}
1 - e^{-\pi \lambda_{\rm u} \theta^{2/\beta}\frac{1}{{\rm sinc}\left(2/\beta \right)} R_{\rm th}^2} \approx \frac{\pi \lambda_{\rm u} \theta^{2/\beta}\frac{1}{{\rm sinc}\left(2/\beta \right)} R_{\rm th}^2}{1 + \pi \lambda_{\rm u} \theta^{2/\beta}\frac{1}{{\rm sinc}\left(2/\beta \right)} R_{\rm th}^2}
\end{align}
Plugging \eqref{lem_approx_1} and \eqref{lem_approx_2} into \eqref{th_lowccdf_claim}, we complete the proof.
\end{proof}
By leveraging Lemma \ref{lem_approx}, Corollary \ref{coro_opt} provides an approximate optimum selection threshold.
\begin{corollary} \label{coro_opt}
Given $\theta, \lambda$, and $\lambda_{\rm u}$, an approximate optimal selection threshold that maximizes $\tilde {\mathtt{P}}\left(R_{\rm th},\theta,  \lambda, \lambda_{\rm u}, \beta \right) $ is 
\begin{align} \label{opt_R}
\tilde{R}_{\rm th}^{\star} = \left(\frac{1}{\pi^2 \lambda  \lambda_{\rm u}  \theta^{2/\beta}\frac{1}{\rm sinc\left(2/\beta \right)}}\right)^{\frac{1}{4}}.
\end{align}
\end{corollary}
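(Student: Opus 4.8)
The plan is to maximize the approximate SIR CCDF $\tilde{\mathtt{P}}$ from Lemma \ref{lem_approx} by elementary optimization after a convenient change of variable. First I would substitute $t = R_{\rm th}^2 > 0$ and collect the parameter-dependent constants into $a = \lambda \pi$ and $b = \pi \lambda_{\rm u} \theta^{2/\beta}/{\rm sinc}(2/\beta)$, so that the objective becomes the compact rational function
\begin{align}
\tilde{\mathtt{P}}(t) = \frac{at}{(1+at)(1+bt)}.
\end{align}
Since $a>0$ and $t>0$, this quantity is strictly positive, so maximizing it is equivalent to minimizing its reciprocal. Expanding the denominator and dividing through by $t$ gives
\begin{align}
\frac{1}{\tilde{\mathtt{P}}(t)} = \frac{1}{a}\left(\frac{1}{t} + (a+b) + abt\right),
\end{align}
from which I read off that the minimizer is unaffected by the additive constant $a+b$ and the positive factor $1/a$. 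Hence the problem reduces to minimizing $g(t) = 1/t + abt$ over $t>0$.

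The key step is then a one-line argument. Because $1/t$ and $abt$ are both positive, AM-GM gives $1/t + abt \ge 2\sqrt{ab}$ with equality if and only if $1/t = abt$, i.e. $t^\star = 1/\sqrt{ab}$. Equivalently I could differentiate, setting $g'(t) = -1/t^2 + ab = 0$ and checking $g''(t) = 2/t^3 > 0$ to confirm that the stationary point is a genuine minimum of $g$ (hence a maximum of $\tilde{\mathtt{P}}$). Back-substituting $t^\star = (R_{\rm th}^\star)^2$ yields $R_{\rm th}^\star = (ab)^{-1/4}$, and replacing $ab = \pi^2 \lambda \lambda_{\rm u} \theta^{2/\beta}/{\rm sinc}(2/\beta)$ reproduces the claimed expression \eqref{opt_R}.

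There is no serious obstacle in the derivation itself; the only point needing care is certifying that the critical point is a maximizer of $\tilde{\mathtt{P}}$ rather than a boundary extremum or saddle, which the AM-GM equality condition (or the sign of $g''$) settles immediately, together with the fact that $\tilde{\mathtt{P}} \to 0$ at both $t \to 0$ and $t \to \infty$. It is worth emphasizing that this threshold optimizes the \emph{approximation} $\tilde{\mathtt{P}}$ and not the exact CCDF $\mathtt{P}$ of Theorem \ref{theo_sirccdf}, which is why the result is stated only as an approximate optimum; its practical accuracy therefore inherits the tightness of the $e^{-x} \approx 1/(1+x)$ step invoked in Lemma \ref{lem_approx}.
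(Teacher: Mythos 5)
Your proposal is correct and follows essentially the same route as the paper: both maximize the rational approximation $\tilde{\mathtt{P}}$ from Lemma \ref{lem_approx} via its first-order condition in $R_{\rm th}$ (equivalently $t=R_{\rm th}^2$), and your substitution $ab=\pi^2\lambda\lambda_{\rm u}\theta^{2/\beta}/{\rm sinc}(2/\beta)$ with $t^\star=(ab)^{-1/2}$ reproduces \eqref{opt_R} exactly. Your AM--GM argument on the reciprocal is in fact slightly more complete than the paper's proof, which merely asserts that $\partial\tilde{\mathtt{P}}/\partial R_{\rm th}=0$ has the stated closed-form solution without verifying that the stationary point is a maximum or checking the boundary behavior as $t\to 0$ and $t\to\infty$.
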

\begin{proof}
To find $\tilde{R}_{\rm th}^{\star} = \arg \max \tilde {\mathtt{P}}\left(R_{\rm th},\theta,  \lambda, \lambda_{\rm u}, \beta \right) $, we solve
\begin{align}
\frac{\partial \tilde{\mathtt{P}}\left(R_{\rm th} \right) }{\partial R_{\rm th}} = 0,
\end{align} 
where
\begin{align}
&\tilde {\mathtt{P}}\left(R_{\rm th},\theta,  \lambda, \lambda_{\rm u}, \beta \right)  \nonumber \\
&= \frac{\lambda \pi R_{\rm th}^2}{\left(1 + \lambda \pi R_{\rm th}^2\right)\left(1+\pi \lambda_{\rm u} \theta^{2/\beta}\frac{1}{{\rm sinc}\left(2/\beta \right)} R_{\rm th}^2\right)}.
\end{align}
It has a closed-form solution
\begin{align} \label{coro_opt_complex}
R_{\rm th} = \left(\frac{1}{\pi^2 \lambda  \lambda_{\rm u}  \theta^{2/\beta}\frac{1}{\rm sinc\left(2/\beta \right)}}\right)^{\frac{1}{4}},
\end{align}
which completes the proof.
\end{proof}

\begin{figure}[t] 
\centerline{\resizebox{0.78\columnwidth}{!}{\includegraphics{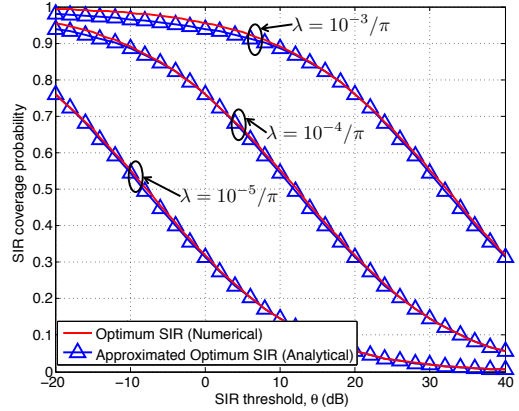}}}    
\caption{The comparison between the optimal SIR CCDF obtained numerically and the approximated optimum SIR CCDF obtained analytically. It is assumed that $\lambda_{\rm u} = 10^{-5}/\pi$, $\beta = 4$, and $\lambda \in \{10^{-3}/\pi, 10^{-4}/\pi, 10^{-5}/\pi\}$. }  \label{fig_compare}  
\end{figure}

\begin{remark}
\normalfont
As observed in Corollary \ref{coro_opt}, $\tilde R^{\star}_{\rm th}$ is inversely proportional to $\lambda$, $\lambda_{\rm u}$, and $\theta^{2/\beta}$. This gives intuition for deciding a selection threshold $R_{\rm th}$. When the network is dense, (large $\lambda$), $R_{\rm th}$ should be small since the probability that there is a RRH located close to a user is high. When there are many interfering users (large $\lambda_{\rm u}$), $R_{\rm th}$ should also be small since there is only little chance of successfully communicating with the selected RRH if the RRH is located far from the user due to the large interference. This is also true when $\theta$ increases. This intuition agrees with the observations from Fig.~\ref{fig_verify}. 
\end{remark}

To demonstrate the obtained selection threshold, we compare the SIR CCDF with $\tilde R^{\star}_{\rm th}$ and the numerically obtained optimum SIR CCDF. For the numerically obtained optimum SIR CCDF, we calculate all the SIR coverage probability for $R_{\rm th} \in \left[0, \infty\right)$ and pick the maximum one. Fig.~\ref{fig_compare} shows the comparison between them. The SIR CCDF with $\tilde R^{\star}_{\rm th}$ is reasonably close to the numerically obtained optimum SIR CCDF over entire range of $\theta$.

\subsection{Asymptotic Performance Analysis}

In this subsection, we analyze the performance of the proposed algorithm in the asymptotic regime, where $\lambda \rightarrow \infty$ and $\lambda_{\rm u} \rightarrow \infty$.
Comparing to the SIR performance of the nearest RRH selection, which corresponds to the best case in the same assumption with the proposed algorithm, a relative performance loss is defined.
Then, we reveal the necessary condition that makes the performance loss vanish in the asymptotic regime.
First, we derive the SIR CCDF when user 1 selects the nearest RRH to the origin. 
\begin{lemma} \label{theo_nearest_claim}
When user 1 selects the nearest RRH, the instantaneous SIR CCDF is 
\begin{align} \label{th_nearest}
{\mathtt{P}}_{\rm n}\left(\theta, \lambda, \lambda_{\rm u}, \beta \right)  
 = \frac{\lambda \ {\rm sinc}\left(\frac{2}{\beta} \right)}{\lambda_{\rm u}\theta^{2/\beta} + \lambda \ {\rm sinc}\left( \frac{2}{\beta}\right)}.
\end{align}
\end{lemma}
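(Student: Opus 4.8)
The plan is to follow the derivation of Theorem~\ref{theo_sirccdf} almost verbatim, changing only the distribution of the serving distance $\left\| {\bf{d}}_{\rm s} \right\|$ and discarding the empty-candidate-set factor.

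First I would establish the law of the nearest-RRH distance. By the void probability of a homogeneous PPP, $\mathbb{P}\left[ \left\| {\bf{d}}_{\rm s} \right\| > r \right] = \mathbb{P}\left[ N\left(\CMcal{B}\left(0,r\right)\right) = 0 \right] = e^{-\lambda \pi r^2}$, so the PDF of the nearest distance is $f_{\left\| {\bf{d}}_{\rm s}\right\|}\left(r\right) = 2\pi \lambda r \, e^{-\lambda \pi r^2}$ for $r>0$. In contrast to the threshold-based selection, there is no empty-set outage here: a homogeneous PPP on $\mathbb{R}^2$ has at least one point almost surely, so the prefactor $\mathbb{P}\left[\CMcal{A} \ne \emptyset\right]$ that appears in Theorem~\ref{theo_sirccdf} is simply replaced by $1$.

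Next I would reduce the SIR CCDF \eqref{def_sir} to a single expectation over the serving distance, exactly as in Theorem~\ref{theo_sirccdf}. Because $H_{{\rm s},1}$ is exponential with unit mean, the CCDF becomes $\mathbb{E}\left[ \exp\left(-\left\|{\bf{d}}_{\rm s}\right\|^{\beta} \theta \sum_{{\bf{u}}_i \in \Phi_{\rm u}\backslash {\bf{u}}_1} \left\|{\bf{d}}_{\rm s}-{\bf{u}}_i\right\|^{-\beta} H_{{\rm s},i}\right)\right]$. Conditioning on ${\bf{d}}_{\rm s}$ and invoking the stationarity of $\Phi_{\rm u}$ together with Slivnyak's theorem (the same step as in Theorem~\ref{theo_sirccdf}), the inner expectation equals the interference Laplace functional evaluated at $\left\|{\bf{d}}_{\rm s}\right\|^{\beta}\theta$, namely $\CMcal{L}_I$ from \eqref{laplace_I}. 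This yields ${\mathtt{P}}_{\rm n} = \mathbb{E}\left[ \exp\left(-\pi \lambda_{\rm u} \theta^{2/\beta} \frac{1}{{\rm sinc}\left(2/\beta\right)} \left\|{\bf{d}}_{\rm s}\right\|^2\right)\right]$.

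Finally I would integrate this against the nearest-neighbor PDF. Setting $a = \pi \lambda_{\rm u}\theta^{2/\beta}\frac{1}{{\rm sinc}\left(2/\beta\right)}$, the expectation becomes $\int_0^\infty e^{-a r^2} \, 2\pi \lambda r \, e^{-\lambda \pi r^2}\,{\rm d}r$. Merging the two exponentials leaves a single Gaussian-type integral $\int_0^\infty 2\pi\lambda r\, e^{-\left(a+\lambda\pi\right)r^2}\,{\rm d}r = \frac{\pi\lambda}{a+\lambda\pi}$; substituting $a$, cancelling a factor of $\pi$, and multiplying numerator and denominator by ${\rm sinc}\left(2/\beta\right)$ recovers the claimed expression \eqref{th_nearest}. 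I do not anticipate a serious obstacle; the only delicate point is the stationarity/Slivnyak reduction, which guarantees that the interference seen from the data-dependent nearest-RRH location is distributionally identical to that seen from the origin, and this is inherited unchanged from Theorem~\ref{theo_sirccdf}.
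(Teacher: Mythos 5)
Your proposal is correct and follows essentially the same route as the paper's proof: reduce the CCDF to an expectation of the interference Laplace functional via the exponential fading of $H_{{\rm s},1}$ and Slivnyak's theorem, then integrate against the Rayleigh-type nearest-neighbor distance PDF $2\lambda\pi r\,e^{-\lambda\pi r^2}$. The only cosmetic difference is that you derive that PDF from the void probability while the paper cites it from a reference, and your explicit remark that the empty-set prefactor becomes $1$ is a correct (if implicit in the paper) observation.
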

\begin{proof}
Rewriting the definition of the SIR CCDF \eqref{def_sir},
\begin{align} \label{th_nearest_sir1}
& {\mathtt{P}}_{\rm n}\left(\theta, \lambda, \lambda_{\rm u}, \beta \right) 
\nonumber \\
& = \mathbb{P} \left[ \frac{ \left\| {\bf{d}}_{\rm s} \right\|^{-\beta}H_{{\rm s}, 1}}{\sum_{{\bf{u}}_i \in \Phi_{\rm u} \backslash {\bf{u}}_1 }   \left\| {\bf{d}}_{\rm s} - {\bf{u}}_i \right\| ^{-\beta} H_{{\rm s}, i}} > \theta \right]
\nonumber \\
&= \mathbb{P} \left[ H_{{\rm s}, 1} > \left\| {\bf{d}}_{\rm s} \right\|^{-\beta}\theta  \sum_{{\bf{u}}_i \in \Phi_{\rm u} \backslash {\bf{u}}_1 }   \left\| {\bf{d}}_{\rm s} - {\bf{u}}_i \right\| ^{-\beta} H_{{\rm s}, i} \right] \nonumber \\
&= \mathbb{E}\left[e^{ - \left\| {\bf{d}}_{\rm s} \right\|^{\beta} \theta \sum_{{\bf{u}}_i \in \Phi_{\rm u} \backslash {\bf{u}}_1}   \left\| {\bf{d}}_{\rm s} - {\bf{u}}_i \right\| ^{-\beta} H_{{\rm s}, i} } \right],
\end{align}
where $\left\| {\bf{d}}_{\rm s} \right\| \le \left\| {\bf{d}}_{i} \right\|$ for $i \in \mathbb{N}$ since we assume that the nearest RRH selection. 
Similar to Theorem \ref{theo_sirccdf},
\begin{align} \label{theo_near_laplace}
&\mathbb{E}\left[e^{-\left\| {\bf{d}}_{\rm s} \right\|^{\beta} \theta \sum_{{\bf{u}}_i \in \Phi_{\rm u} \backslash {\bf{u}}_1 }   \left\| {\bf{d}}_{\rm s} - {\bf{u}}_i \right\| ^{-\beta} H_{{\rm s}, i} } \right] \nonumber \\
& \mathop = \mathbb{E}_{{\bf{d}}_{\rm s}}\left[ \CMcal{L}_{I  }  \left(\left\| {\bf{d}}_{\rm s} \right\|^{\beta}\theta \right) \right],
\end{align}
where 
$\CMcal{L}_{I } \left(s \right)$ is the Laplace functional of $I$ where $I = \sum_{{\bf{u}}_i \in \Phi_{\rm u} \backslash {\bf{u}}_1}   \left\|  {\bf{u}}_i \right\| ^{-\beta} H_{{\rm s}, i} $. 
The Laplace functional of $I$ is given by
\begin{align}
\CMcal{L}_I\left(s \right) = \exp\left(-\pi \lambda_{\rm u} s^{2/\beta}\frac{1}{{\rm sinc}\left(2/\beta \right)} \right).
\end{align}
Plugging it into \eqref{theo_near_laplace}, we have
\begin{align} \label{th_near_expectation}
&\mathbb{E}_{{\bf{d}}_{\rm s}}\left[ \CMcal{L}_{ I  } \left(\left\| {\bf{d}}_{\rm s} \right\|^{\beta}\theta \right) \right] \nonumber \\
&= \mathbb{E}\left[\exp\left(-\pi \lambda_{\rm u} \theta^{2/\beta} \left\| {\bf{d}}_{\rm s} \right\| ^2\frac{1}{{\rm sinc}\left(2/\beta \right)} \right) \right].
\end{align}
Now we use the PDF of $\left\| {\bf{d}}_{\rm s} \right\|$, which is \cite{andrew:10}
\begin{align} \label{th_near_pdf}
f_{\left\| {\bf{d}}_{\rm s} \right\| }\left(r \right) = {2\lambda \pi r} e^{-\lambda \pi r^2}.
\end{align}
Leveraging \eqref{th_near_pdf}, the expectation in \eqref{th_near_expectation} is calculated as
\begin{align}
&\mathbb{E}\left[\exp\left(-\pi \lambda_{\rm u} \theta^{2/\beta} \left\| {\bf{d}}_{\rm s} \right\| ^2\frac{1}{{\rm sinc}\left(2/\beta \right)} \right) \right] \nonumber \\
&= \int_{0}^{\infty} \!\! \exp\left(-\pi \lambda_{\rm u} \theta^{2/\beta} \frac{\left\| {\bf{d}}_{\rm s} \right\| ^2}{{\rm sinc}\left(2/\beta \right)} \right) 2 \lambda \pi r \exp\left(-\lambda \pi r^2 \right)  {\rm d} r
\nonumber \\
&=\frac{\lambda \ {\rm sinc}\left(\frac{2}{\beta} \right)}{\lambda_{\rm u}\theta^{2/\beta} + \lambda \ {\rm sinc}\left( \frac{2}{\beta}\right)},
\end{align}
which completes the proof.
\end{proof}
Now we define the relative SIR performance loss in the following.

\begin{definition}
The relative SIR performance of the proposed algorithm compared to the nearest RRH selection is defined as
\begin{align}
&L(R_{\rm th},\theta,  \lambda, \lambda_{\rm u}, \beta) = \frac{{\mathtt{P}}\left(R_{\rm th},\theta,  \lambda, \lambda_{\rm u}, \beta \right) }{{\mathtt{P}}_{\rm n}\left(\theta, \lambda, \lambda_{\rm u}, \beta \right) } \nonumber \\
& = \frac{\left(1 - e^{-\lambda \pi R_{\rm th}^2} \right) \frac{\left( 1 - e^{-\pi \lambda_{\rm u} \theta^{2/\beta} \frac{1}{{\rm sinc} \left(2/\beta \right)}R_{\rm th}^2  }\right)}{\pi \lambda_{\rm u} \theta^{2/\beta} \frac{1}{{\rm sinc} \left(2/\beta \right)} R_{\rm th}^2}}{\frac{\lambda \ {\rm sinc}\left(\frac{2}{\beta} \right)}{\lambda_{\rm u}\theta^{2/\beta} + \lambda \ {\rm sinc}\left( \frac{2}{\beta}\right)}} \nonumber \\
& = \frac{\left(1 - e^{-\lambda \pi R_{\rm th}^2} \right)\left( 1 - e^{-\pi \lambda_{\rm u} \theta^{2/\beta} \frac{1}{{\rm sinc} \left(2/\beta \right)}R_{\rm th}^2  }\right) }{\frac{\pi \lambda_{\rm u} \lambda \theta^{2/\beta} R_{\rm th}^2 }{\lambda_{\rm u}\theta^{2/\beta} + \lambda \ {\rm sinc}\left( \frac{2}{\beta}\right) }  }.
\end{align}
Since the nearest RRH selection is the best case of the proposed algorithm, $L(R_{\rm th},\theta,  \lambda, \lambda_{\rm u}, \beta) \le 1$. When $L(R_{\rm th},\theta,  \lambda, \lambda_{\rm u}, \beta)=1$, the proposed algorithm has the same SIR performance with the nearest RRH selection. When the proposed algorithm uses the approximate optimum selection threshold $\tilde R_{\rm th}^{\star}$, we denote that
\begin{align}
&L(\tilde{R}_{\rm th}^{\star},\theta,  \lambda, \lambda_{\rm u}, \beta) \nonumber \\
&= \frac{\left(1 - e^{-\lambda \pi \left( \tilde R_{\rm th}^{\star} \right)^2} \right)\left( 1 - e^{-\pi \lambda_{\rm u} \theta^{2/\beta} \frac{1}{{\rm sinc} \left(2/\beta \right)}\left( \tilde R_{\rm th}^{\star} \right)^2  }\right)  }{\frac{\pi \lambda_{\rm u} \lambda \theta^{2/\beta} \left( \tilde R_{\rm th}^{\star} \right)^2}{\lambda_{\rm u}\theta^{2/\beta} + \lambda \ {\rm sinc}\left( \frac{2}{\beta}\right) }   }.
\end{align}
\end{definition}
The performance loss ($L(R_{\rm th},\theta,  \lambda, \lambda_{\rm u}, \beta) \le 1$) in the proposed algorithm comes from the random selection phase, which cannot guarantee the nearest RRH is selected. As mentioned before, however, the nearest RRH selection requires complexity that increases with the RRH density. For this reason, the performance loss is interpreted as a cost for keeping the complexity independent to the RRH density.

Now, we reveal the condition for $L(\tilde{R}_{\rm th}^{\star},\theta,  \lambda, \lambda_{\rm u}, \beta) \rightarrow 1$ in the asymptotic regime, i.e., $\lambda \rightarrow \infty$ and $\lambda_{\rm u} \rightarrow \infty$ in the following theorem.
\begin{theorem} \label{asym_anal}
Assuming that $\lambda \rightarrow \infty$ and $\lambda_{\rm u} \rightarrow \infty$, the performance loss vanishes, i.e., $L(\tilde{R}_{\rm th}^{\star},\theta,  \lambda, \lambda_{\rm u}, \beta) \rightarrow 1$ if
\begin{align}
\frac{\sqrt{\lambda}}{\sqrt{\lambda_{\rm u}}} \rightarrow \infty.
\end{align}
\end{theorem}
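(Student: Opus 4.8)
The plan is to substitute the closed-form threshold $\tilde{R}_{\rm th}^{\star}$ from Corollary~\ref{coro_opt} directly into the ratio $L(\tilde{R}_{\rm th}^{\star},\theta,\lambda,\lambda_{\rm u},\beta)$ and reduce it to a function of a single scalar variable, after which the limit becomes transparent. First I would square the threshold to obtain
\begin{align}
(\tilde{R}_{\rm th}^{\star})^2 = \frac{1}{\pi}\sqrt{\frac{{\rm sinc}(2/\beta)}{\lambda \lambda_{\rm u}\theta^{2/\beta}}},
\end{align}
and introduce the shorthand
\begin{align}
a = \sqrt{\frac{\lambda\,{\rm sinc}(2/\beta)}{\lambda_{\rm u}\theta^{2/\beta}}},\qquad b = \sqrt{\frac{\lambda_{\rm u}\theta^{2/\beta}}{\lambda\,{\rm sinc}(2/\beta)}},
\end{align}
so that $ab = 1$. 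Plugging $(\tilde{R}_{\rm th}^{\star})^2$ into the two exponential arguments in the numerator of $L$ shows that the argument $\lambda\pi(\tilde{R}_{\rm th}^{\star})^2$ equals exactly $a$, while the argument $\pi\lambda_{\rm u}\theta^{2/\beta}\frac{1}{{\rm sinc}(2/\beta)}(\tilde{R}_{\rm th}^{\star})^2$ equals exactly $b$, once the $\pi$'s and fourth roots cancel.

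Next I would simplify the denominator. The factor $\pi\lambda_{\rm u}\lambda\theta^{2/\beta}(\tilde{R}_{\rm th}^{\star})^2$ collapses to $\sqrt{\lambda\lambda_{\rm u}\theta^{2/\beta}{\rm sinc}(2/\beta)}$, and dividing $\lambda_{\rm u}\theta^{2/\beta}+\lambda\,{\rm sinc}(2/\beta)$ by this quantity produces precisely $a+b$ after splitting the sum term by term. Combining the pieces yields the compact identity
\begin{align}
L(\tilde{R}_{\rm th}^{\star},\theta,\lambda,\lambda_{\rm u},\beta) = (1-e^{-a})(1-e^{-b})(a+b),
\end{align}
which, using $b = 1/a$, is a function of the single variable $a$ alone.

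Finally, I would translate the hypothesis. Since $a = \sqrt{{\rm sinc}(2/\beta)/\theta^{2/\beta}}\cdot(\sqrt{\lambda}/\sqrt{\lambda_{\rm u}})$ and the remaining factor is a fixed positive constant for $\beta>2$, the condition $\sqrt{\lambda}/\sqrt{\lambda_{\rm u}}\to\infty$ is equivalent to $a\to\infty$, whence $b = 1/a \to 0$. Then $1-e^{-a}\to 1$, while the remaining factor $(1-e^{-b})(a+b)$ is the only delicate piece, being an indeterminate product of $(1-e^{-b})\to 0$ and $(a+b)\to\infty$. I would resolve it with the expansion $1-e^{-b}=b-b^2/2+\cdots$ together with $a+b = 1/b + b$, giving $(1-e^{-b})(1/b+b)\to 1$ as $b\to 0$. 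Hence $L\to 1$.

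I expect the main obstacle to be the bookkeeping in the algebraic reduction, namely correctly cancelling the $\pi$'s and the fourth roots so that the two exponents and the denominator normalization come out cleanly as $a$, $b$, and $a+b$; it is easy to misplace a power here. Once the identity $L=(1-e^{-a})(1-e^{-b})(a+b)$ with $ab=1$ is in hand, the asymptotic claim is an elementary one-variable limit, and the role of the hypothesis $\sqrt{\lambda}/\sqrt{\lambda_{\rm u}}\to\infty$ (as opposed to $\lambda,\lambda_{\rm u}\to\infty$ separately) is exactly to force $a\to\infty$ and $b\to 0$.
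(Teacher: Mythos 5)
Your proposal is correct and follows essentially the same route as the paper: substituting $\tilde R^{\star}_{\rm th}$ reduces $L$ to the one-variable function $(1-e^{-a})(1-e^{-1/a})(a+1/a)$, which is exactly the paper's $f(x)$ with $x=a=C_{\lambda/\lambda_{\rm u}}\sqrt{{\rm sinc}(2/\beta)}/\theta^{1/\beta}$, and the limit as $a\to\infty$ hinges on the same indeterminate product $x(1-e^{-1/x})\to 1$. The only cosmetic difference is that you resolve that limit with the Taylor expansion of $1-e^{-b}$ whereas the paper invokes L'H\^opital's rule on $(1-e^{-y})/y$.
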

\begin{proof}
From \eqref{opt_R}, $L(\tilde{R}_{\rm th}^{\star},\theta,  \lambda, \lambda_{\rm u}, \beta)$ is
\begin{align}
&L(\tilde{R}_{\rm th}^{\star},\theta,  \lambda, \lambda_{\rm u}, \beta) \nonumber \\
&= \frac{\left(1 - e^{\left(-{\frac{\sqrt{\lambda}\sqrt{{\rm sinc}\left(2/\beta \right)}}{\sqrt{\lambda_{\rm u}}  \theta^{1/\beta} }} \right)}\right)\left(1 - e^{\left(-\frac{  \sqrt{\lambda_{\rm u} } \theta^{1/\beta } }{\sqrt{\lambda}\sqrt{{\rm sinc}\left(2/\beta \right)}} \right)} \right)}{\frac{\sqrt{\lambda}  \sqrt{\lambda_{\rm u}} \theta^{1/\beta} \sqrt{{\rm sinc}\left(2/\beta \right)} }{\left(\lambda_{\rm u}\theta^{2/\beta} + \lambda \ {\rm sinc}\left(2/\beta \right) \right)}}.
\end{align}
Letting $C_{\lambda/\lambda_{\rm u}} = \sqrt{\lambda}/\sqrt{\lambda_{\rm u}}$,  
\begin{align}
&L(\tilde{R}_{\rm th}^{\star},\theta,  \lambda, \lambda_{\rm u}, \beta) \nonumber \\
&= \frac{\left(1 - e^{\left(-C_{\lambda/\lambda_{\rm u}}{\frac{\sqrt{{\rm sinc}\left(2/\beta \right)}}{ \theta^{1/\beta} }} \right)}\right)\left(1 - e^{\left(-\frac{1}{C_{\lambda/\lambda_{\rm u}}}\frac{  \theta^{1/\beta } }{\sqrt{{\rm sinc}\left(2/\beta \right)}} \right)} \right)}{\frac{C_{\lambda/ \lambda_{\rm u}}}{\left( \frac{\theta^{1/\beta}}{\sqrt{{\rm sinc}\left(2/\beta \right)}} \right)}  }  \nonumber \\
&+\frac{\left(1 - e^{\left(-C_{\lambda/\lambda_{\rm u}}{\frac{\sqrt{{\rm sinc}\left(2/\beta \right)}}{ \theta^{1/\beta} }} \right)}\right)\left(1 - e^{\left(-\frac{1}{C_{\lambda/\lambda_{\rm u}}}\frac{  \theta^{1/\beta } }{\sqrt{{\rm sinc}\left(2/\beta \right)}} \right)} \right)}{\frac{1  }{C_{\lambda/ \lambda_{\rm u}}\left( \frac{\sqrt{{\rm sinc}\left(2/\beta \right)}}{\theta^{1/\beta}} \right)}}.
\end{align}
Now consider a function defined as
\begin{align}
f\left(x\right) =  \frac{\left(1 - e^{-x} \right)\left(1 - e^{-\frac{1}{x}} \right)}{x} +  \frac{\left(1 - e^{-x} \right)\left(1 - e^{-\frac{1}{x}} \right)}{1/x},
\end{align}
where $x = C_{\lambda/\lambda_{\rm u}}{\frac{\sqrt{{\rm sinc}\left(2/\beta \right)}}{ \theta^{1/\beta} }}$.
When $x \rightarrow \infty$, we have
\begin{align}
&\lim_{x \rightarrow \infty} f\left( x \right)  \nonumber \\
&= \lim_{x \rightarrow \infty} \frac{\left(1 - e^{-x} \right)\left(1 - e^{-\frac{1}{x}} \right)}{x} +  \frac{\left(1 - e^{-x} \right)\left(1 - e^{-\frac{1}{x}} \right)}{1/x} \nonumber \\
&= \lim_{x \rightarrow \infty} \frac{\left(1 - e^{-x} \right)\left(1 - e^{-\frac{1}{x}} \right)}{1/x} \nonumber \\
&= \lim_{x \rightarrow \infty} x - xe^{-\frac{1}{x}} - xe^{-x}+xe^{-x-\frac{1}{x}} \nonumber \\
&= \lim_{y \rightarrow 0} \frac{\left(1 - e^{-y} \right)}{y} \nonumber \\
&\mathop = \limits^{(a)}  1,
\end{align}
where (a) follows L'Hopital's rule.
This concludes that when $C_{\lambda/\lambda_{\rm u}} \rightarrow \infty$, $L(\tilde{R}_{\rm th}^{\star},\theta,  \lambda, \lambda_{\rm u}, \beta)\rightarrow 1$, i.e., the performance loss vanishes. This completes the proof.
\end{proof}

\begin{figure}[t!] 
\centerline{\resizebox{0.75\columnwidth}{!}{\includegraphics{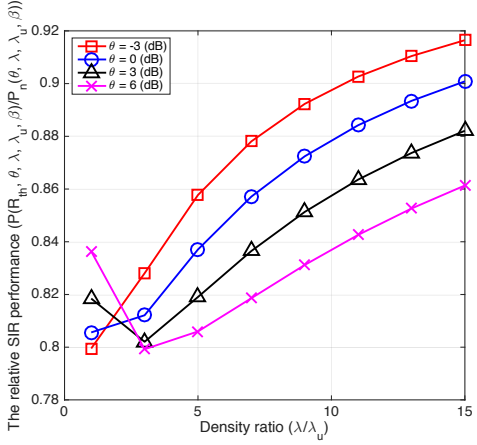}}}    
\caption{The SIR performance loss of the proposed algorithm depending on the density ratio and various SIR thresholds. It is assumed that $\beta = 4$. As observed in the figure, the performance loss due to the random selection decreases as the density ratio increases.}
\label{compare_nonasym}
\end{figure}

In Theorem \ref{asym_anal}, it is shown that the performance loss due to the random selection vanishes if the relative RRH density $C_{\lambda/\lambda_{\rm u}}$ goes to infinity. In practice, however, the assumption $C_{\lambda/\lambda_{\rm u}} \rightarrow \infty$ is too extreme even in a dense C-RAN scenario. Nevertheless, Theorem \ref{asym_anal} is still valid to get intuition of the relative SIR performance of the proposed algorithm in the non-asymptotic regime, i.e, $\frac{\sqrt{\lambda}}{\sqrt{\lambda_{\rm u}}} < \infty$. For instance, since $L(\tilde{R}_{\rm th}^{\star},\theta,  \lambda, \lambda_{\rm u}, \beta)$ is smooth for all $C_{\lambda/\lambda_{\rm u}}$, it is obvious that $L(\tilde{R}_{\rm th}^{\star},\theta,  \lambda, \lambda_{\rm u}, \beta)$ approaches to $1$ as $C_{\lambda/\lambda_{\rm u}}$ increases. In other words, the performance loss due to the random selection in the proposed algorithm becomes negligible as the density of RRH increases. 
One noticeable point here is that the algorithm complexity keeps constant independent to the RRH density.
This intuition is demonstrated by the simulation in Fig.~\ref{compare_nonasym}.
In Fig.~\ref{compare_nonasym}, $L(\tilde{R}_{\rm th}^{\star},\theta,  \lambda, \lambda_{\rm u}, \beta)$ is drawn depending on the density ratio $\frac{\lambda}{\lambda_{\rm u}}$ and the SIR threshold $\theta$. As observed in the figure, $L(\tilde{R}_{\rm th}^{\star},\theta,  \lambda, \lambda_{\rm u}, \beta)$ increases as $\frac{\lambda}{\lambda_{\rm u}}$ increases, and also $L(\tilde{R}_{\rm th}^{\star},\theta,  \lambda, \lambda_{\rm u}, \beta)$ is larger at low $\theta$.
From this observation, we see that the proposed algorithm performs well (i) in a dense RRH environment and (ii) at a low SIR threshold. 
One non-trivial observation in Fig.~\ref{compare_nonasym} is that at high SIR thresholds $\theta = 3, 6 \;({\rm dB})$, there is a range of $\frac{\lambda}{\lambda_{\rm u}}$ where $L(\tilde{R}_{\rm th}^{\star},\theta,  \lambda, \lambda_{\rm u}, \beta)$ and $\frac{\lambda}{\lambda_{\rm u}}$ inversely proportional, i.e., $L(\tilde{R}_{\rm th}^{\star},\theta,  \lambda, \lambda_{\rm u}, \beta)$ decreases as $\frac{\lambda}{\lambda_{\rm u}}$ increases. This is particularly observed in the low density ratio region $\frac{\lambda}{\lambda_{\rm u}} < 3$. 
The implication behind this observation is that in the region $1<\frac{\lambda}{\lambda_{\rm u}} < 3$, the SIR performance of the nearest RRH selection improves faster than that of the proposed algorothm, resulting in $L(\tilde{R}_{\rm th}^{\star},\theta,  \lambda, \lambda_{\rm u}, \beta)$ rather decreases.

\section{Generalization}
In this section, we generalize the approximate optimum selection threshold derived in the previous section. First, we relax the assumption where a distance between each RRH and a user is used as a predefined threshold. Next, the assumption that only one RRH is selected is generalized to multiple RRHs, i.e., $L>1$. 

\subsection{Received Power Threshold}

In a real wireless environment, it is difficult to estimate the exact distance from the user because of long-term fading such as a shadowing. For this reason, it is more desirable for the proposed RRH selection algorithm to use the received power to select a RRH rather than the distance. 
To derive an approximate optimum selection threshold analytically, we make an assumption about a shadowing.
After averaging out the fast fading coefficients, the received power at RRH $i$ from user 1 is 
\begin{align} \label{rx_power}
P_{i,1} = S_{i,1} \left\| {\bf{d}}_i \right\|^{-\beta},
\end{align}
where $S_{i,1}$ denotes a shadowing coefficient from user 1 and RRH $i$. 
Each RRH measures the received power, and compares it with a predefined received power threshold $P_{\rm th}$. If the measured received power is larger than the threshold, i.e.,
$P_{i,1} > P_{\rm th}$, the RRH located at ${\bf{d}}_i$ is included in a candidate set $\CMcal{A}$, unless it is not. The second phase of the algorithm is equivalent with a case where the distance threshold is used. 

Now we attempt to obtain the optimum selection threshold for a case where the received power threshold is used.
To this end, Lemma \ref{lem_lattice} is introduced. It is for incorporating a shadowing effect into the approximate optimum distance threshold derived in Corollary \ref{coro_opt}.

\begin{lemma} \label{lem_lattice}
Assume generic shadowing coefficient, denoted as $S$ where
\begin{align} \label{shadowing_cond}
\mathbb{E}\left[S^{\frac{2}{\beta}} \right] < \infty.
\end{align}
Then, the process of propagation losses experienced by the typical user is an non-homogeneous Poisson process on $\mathbb{R}^+$ with intensity measure
\begin{align} \label{lem_measure}
\Lambda\left(\left[0, t\right) \right) &= \lambda \int_{\mathbb{R}^2} \mathbb{P} \left[ \frac{x^{\beta}}{S} \in \left[0,t \right) \right]{\rm d}x\nonumber \\
&= \lambda \pi \mathbb{E}\left[S^{\frac{2}{\beta}} \right] t^{\frac{2}{\beta}}. 
\end{align}
\end{lemma}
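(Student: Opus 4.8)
The plan is to recognize the propagation-loss process $\{ \|\mathbf{d}_i\|^{\beta} / S_{i,1} \}$ as the image of the RRH point process under an independently marked, deterministic transformation, and then to invoke the marking and mapping (displacement) theorems for Poisson processes \cite{baccelli:06}. First I would attach to each RRH $\mathbf{d}_i \in \Phi$ an independent shadowing mark $S_{i,1}$ drawn from the common law of $S$. By the marking theorem, the marked process $\{(\mathbf{d}_i, S_{i,1})\}$ is a Poisson process on $\mathbb{R}^2 \times \mathbb{R}^+$ with intensity measure $\lambda \, \mathrm{d}\mathbf{d} \times \mu_S(\mathrm{d}s)$, where $\mu_S$ denotes the distribution of $S$.

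Next I would apply the map $g(\mathbf{d}, s) = \|\mathbf{d}\|^{\beta} / s$ sending each marked point to its propagation loss. The mapping theorem then asserts that the image is again a Poisson process on $\mathbb{R}^+$, whose intensity measure is the pushforward of $\lambda \, \mathrm{d}\mathbf{d} \times \mu_S(\mathrm{d}s)$ under $g$. Evaluating this pushforward on $[0,t)$ yields directly
\begin{align*}
\Lambda\bigl([0,t)\bigr) = \lambda \int_{\mathbb{R}^2} \mathbb{P}\Bigl[ \tfrac{\|\mathbf{d}\|^{\beta}}{S} \in [0,t) \Bigr] \, \mathrm{d}\mathbf{d},
\end{align*}
which is the first equality in \eqref{lem_measure}.

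It then remains only to evaluate this integral in closed form. Passing to polar coordinates and writing $\mathbb{P}[ r^{\beta}/S < t ] = \mathbb{P}[ S > r^{\beta}/t ]$, I would use Tonelli's theorem to interchange the expectation over $S$ with the radial integral, obtaining
\begin{align*}
\Lambda\bigl([0,t)\bigr) = 2\pi\lambda \, \mathbb{E}\Bigl[ \int_0^{(tS)^{1/\beta}} r \, \mathrm{d}r \Bigr] = \pi \lambda \, \mathbb{E}\bigl[ S^{2/\beta} \bigr] \, t^{2/\beta},
\end{align*}
which matches the second equality in \eqref{lem_measure}. The moment condition \eqref{shadowing_cond} enters precisely at this step, guaranteeing that $\Lambda([0,t))$ is finite for every finite $t$.

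The hard part will not be the integral but the careful verification of the hypotheses of the mapping theorem, namely that the pushforward intensity $\Lambda$ is locally finite and non-atomic, so that the image process is genuinely Poisson rather than merely having the stated mean measure. Local finiteness is exactly what \eqref{shadowing_cond} supplies, since $\mathbb{E}[S^{2/\beta}] < \infty$ forces $\Lambda([0,t)) < \infty$; non-atomicity follows from the absolute continuity of $\Lambda$, whose density with respect to Lebesgue measure on $\mathbb{R}^+$ is proportional to $t^{2/\beta - 1}$. Once these two conditions are confirmed, the Poisson property together with the claimed intensity follows at once from \cite{baccelli:06}.
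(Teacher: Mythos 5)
Your argument is correct, but it is worth noting that the paper does not actually prove this lemma at all: its ``proof'' is the single line ``See the reference \cite{Blas:2013} Lemma 1.'' What you have written is, in effect, a self-contained reconstruction of that cited result, and it follows the standard route: attach i.i.d.\ shadowing marks to the PPP (marking theorem), push the marked process forward through $g(\mathbf{d},s)=\|\mathbf{d}\|^{\beta}/s$ (mapping/displacement theorem), and evaluate the resulting intensity measure in polar coordinates with Tonelli. Your closed-form computation checks out: $2\pi\lambda\,\mathbb{E}\bigl[\int_0^{(tS)^{1/\beta}} r\,\mathrm{d}r\bigr]=\pi\lambda\,\mathbb{E}\bigl[S^{2/\beta}\bigr]t^{2/\beta}$, and you correctly identify where the moment condition \eqref{shadowing_cond} is used (local finiteness of the pushforward intensity) and why non-atomicity holds (the intensity is absolutely continuous with density proportional to $t^{2/\beta-1}$). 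Two small points of care: first, the lemma as stated in the paper only says ``generic shadowing coefficient,'' so you should make explicit---as you implicitly do---that the marks are assumed i.i.d.\ across RRHs and independent of $\Phi$, which is the hypothesis under which the marking theorem applies (the paper only states this independence later, in the corollary that uses the lemma); second, strictly speaking the mapping theorem also requires checking that $g$ does not map sets of infinite intensity onto bounded sets, which is exactly the local-finiteness condition you verify, so nothing is missing. In short, your proof supplies the argument the paper outsources to \cite{Blas:2013}, and it buys the reader a complete, verifiable derivation at the cost of a few extra lines.
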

\begin{proof}
See the reference \cite{Blas:2013} Lemma 1.
\end{proof}


By leveraging Lemma \ref{lem_lattice}, we obtain an approximate optimum received power threshold $\tilde P^{\star}_{\rm th}$ in the following corollary.  

\begin{corollary}
Assume independent and identical generic shadowing coefficient denoted as $S$ satisfying \eqref{shadowing_cond}. 
When the received power is used as the selection threshold, an approximate optimum selection threshold $\tilde P^{\star}_{\rm th}$ is
\begin{align}
\tilde P^{\star}_{\rm th} = \frac{\left({\tilde R^{\star}_{\rm th, sh}}\right)^{-\beta}}{\mathbb{E}\left[S^{\frac{2}{\beta}} \right]^{-\frac{\beta}{2}} },
\end{align}
where $\tilde R^{\star}_{\rm th, sh}$ is defined as
\begin{align}
\tilde{R}_{\rm th, sh}^{\star} = \left(\frac{1}{\pi^2 \lambda  \lambda_{\rm u}  \theta^{2/\beta}\frac{E_S}{\rm sinc\left(2/\beta \right)}}\right)^{\frac{1}{4}},\; {\rm with}
\end{align}
\begin{align}
E_S = {\mathbb{E}\left[{S^{2/\beta}} \right]} \mathbb{E}\left[\frac{1}{S}\right]^{2/\beta}.
\end{align}
\end{corollary}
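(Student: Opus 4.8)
The plan is to reduce the received-power case to the already-solved distance case of Corollary~\ref{coro_opt} by passing to the equivalent ``propagation-loss'' description of the network furnished by Lemma~\ref{lem_lattice}. The starting observation is that the first-phase test $P_{i,1}>P_{\rm th}$ with $P_{i,1}=S_{i,1}\|{\bf d}_i\|^{-\beta}$ from \eqref{rx_power} is exactly a threshold test on the propagation loss $\|{\bf d}_i\|^{\beta}/S_{i,1}$, namely $\|{\bf d}_i\|^{\beta}/S_{i,1}<1/P_{\rm th}$. Hence the candidate set $\CMcal{A}$ is precisely the set of points of the propagation-loss process falling below the level $t_{\rm th}=1/P_{\rm th}$, and by Lemma~\ref{lem_lattice} this process is an inhomogeneous PPP whose intensity \eqref{lem_measure} matches that of a homogeneous \emph{distance} PPP with effective density $\lambda_{\rm eff}=\lambda\,\mathbb{E}[S^{2/\beta}]$. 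In this effective-distance picture the desired received power collapses to a clean power law, because the serving-link shadowing is absorbed into the effective distance of the selected RRH, so the entire first-phase/selection structure becomes formally identical to the distance-threshold model of Section~III.

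Concretely, I would first map the effective-distance process onto a homogeneous PPP of the \emph{original} density $\lambda$ by the deterministic rescaling that divides the effective density by $\mathbb{E}[S^{2/\beta}]$; since scaling distances by $\mathbb{E}[S^{2/\beta}]^{1/2}$ scales a planar density by $1/\mathbb{E}[S^{2/\beta}]$, this step records the correspondence between a received-power threshold $P_{\rm th}$ and a rescaled distance threshold, which on inversion is exactly the relation $\tilde P_{\rm th}^{\star}=(\tilde R_{\rm th,sh}^{\star})^{-\beta}\,\mathbb{E}[S^{2/\beta}]^{\beta/2}$ asserted in the statement. Next I would re-derive the interference Laplace functional \eqref{laplace_I} with the shadowing marks on the interfering links carried through, so that the effective interfering-user density entering the second factor of the SIR CCDF is modified by the relevant shadowing moment. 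Collecting the serving-link moment $\mathbb{E}[S^{2/\beta}]$ together with the interference moment $\mathbb{E}[1/S]^{2/\beta}$ then yields the combined factor $E_S=\mathbb{E}[S^{2/\beta}]\,\mathbb{E}[1/S]^{2/\beta}$ multiplying $1/{\rm sinc}(2/\beta)$.

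With this effective network in hand, I would invoke Corollary~\ref{coro_opt} verbatim, replacing $1/{\rm sinc}(2/\beta)$ by $E_S/{\rm sinc}(2/\beta)$ in \eqref{opt_R}; this produces the optimal effective-distance threshold $\tilde R_{\rm th,sh}^{\star}$. Substituting $\tilde R_{\rm th,sh}^{\star}$ back through the distance-to-power relation obtained in the rescaling step gives $\tilde P_{\rm th}^{\star}=(\tilde R_{\rm th,sh}^{\star})^{-\beta}\mathbb{E}[S^{2/\beta}]^{\beta/2}$, i.e.\ the displayed expression. A convenient sanity check is the degenerate case $S\equiv 1$, where $E_S=1$ and $\mathbb{E}[S^{2/\beta}]^{\beta/2}=1$, so the whole construction collapses back to Corollary~\ref{coro_opt}.

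I expect the delicate part to be the shadowing-moment bookkeeping rather than any single integral. Two distinct moments enter for two distinct reasons—$\mathbb{E}[S^{2/\beta}]$ from the displacement of the serving-RRH point process via Lemma~\ref{lem_lattice}, and $\mathbb{E}[1/S]^{2/\beta}$ from re-expressing the interference on the serving link's effective-distance scale—and keeping these from being conflated is the crux of the argument; one must also verify the finiteness condition \eqref{shadowing_cond} so that Lemma~\ref{lem_lattice} applies and every moment is well defined. The second point requiring care is justifying that the closed-form optimizer of Corollary~\ref{coro_opt}, derived for the non-shadowed homogeneous model, transfers unchanged once the densities and the $1/{\rm sinc}(2/\beta)$ factor are replaced by their effective counterparts; this is legitimate because the rescaled effective network is, in distribution, an instance of the Section~III model, so the approximation of Lemma~\ref{lem_approx} and the stationary-point computation of Corollary~\ref{coro_opt} carry over term by term.
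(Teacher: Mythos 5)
Your reduction to the propagation-loss process is internally inconsistent with the formula you are trying to prove, and the inconsistency sits exactly where you locate the ``delicate bookkeeping.'' If, as you propose, the serving-link shadowing $S_{{\rm s},1}$ is absorbed into the effective distance of the selected RRH (which is indeed what a threshold test on $P_{i,1}=S_{i,1}\left\|{\bf{d}}_i\right\|^{-\beta}$ does), then $S_{{\rm s},1}$ no longer survives as a separate random variable anywhere in the SIR, and there is no mechanism left to generate the moment $\mathbb{E}\left[1/S\right]^{2/\beta}$. Carrying your route out honestly: conditioned on the selected propagation loss $\ell_{\rm s}$, the interference Laplace functional contributes only $\mathbb{E}\left[S^{2/\beta}\right]$ (from the interferers' marks), the selection contributes $\Lambda\left(\left[0,1/P_{\rm th}\right)\right)=\lambda\pi\mathbb{E}\left[S^{2/\beta}\right]\left(1/P_{\rm th}\right)^{2/\beta}$ via Lemma~\ref{lem_lattice}, and after the change of variable $\rho^2=\mathbb{E}\left[S^{2/\beta}\right]\left(1/P_{\rm th}\right)^{2/\beta}$ both occurrences of $\mathbb{E}\left[S^{2/\beta}\right]$ are absorbed simultaneously, leaving exactly the unshadowed CCDF of Theorem~\ref{theo_sirccdf} in the variable $\rho$. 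Optimizing that returns $\rho^{\star}=\tilde R^{\star}_{\rm th}$, i.e.\ a power threshold $\left(\tilde R^{\star}_{\rm th}\right)^{-\beta}\mathbb{E}\left[S^{2/\beta}\right]^{\beta/2}$ with $E_S$ effectively equal to $1$ --- not the claimed $\left(\tilde R^{\star}_{\rm th, sh}\right)^{-\beta}\mathbb{E}\left[S^{2/\beta}\right]^{\beta/2}$. Your assertion that $\mathbb{E}\left[1/S\right]^{2/\beta}$ arises ``from re-expressing the interference on the serving link's effective-distance scale'' has no concrete integral behind it and contradicts your own absorption step.

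The paper obtains $E_S$ by a decomposition that is not interchangeable with yours: it keeps the \emph{distance}-threshold selection for the CCDF analysis, so that $S_{{\rm s},1}$ remains an explicit random factor $1/S_{{\rm s},1}$ in the exponent of \eqref{sir_ccdf_shadowing_pre}; it conditions on $S_{{\rm s},1}$ and applies Jensen's inequality to obtain a \emph{lower bound} in which $1/S_{{\rm s},1}$ is replaced by the constant $\mathbb{E}\left[1/S\right]$, and it is this constant, pushed through the Laplace functional \eqref{laplace_int_shadowing}, that produces $\mathbb{E}\left[1/S\right]^{2/\beta}$; the factor $\mathbb{E}\left[S^{2/\beta}\right]$ inside $E_S$ comes from the \emph{interferers'} shadowing marks. (Your attribution of the two moments is therefore reversed as well.) Only after optimizing this Jensen lower bound to get $\tilde R^{\star}_{\rm th, sh}$ does the paper invoke Lemma~\ref{lem_lattice}, and only for the final conversion --- equating the mean candidate counts $\lambda\pi\left(\tilde R^{\star}_{\rm th, sh}\right)^2=\lambda\pi\mathbb{E}\left[S^{2/\beta}\right]t^{2/\beta}$ --- which is the one part of your proposal that matches the paper and is correct. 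To repair your argument you must abandon the ``absorb $S_{{\rm s},1}$ into the selection'' picture for the CCDF computation, keep the serving-link shadowing as a conditioning variable, and perform the Jensen step; otherwise the $E_S$ appearing in $\tilde R^{\star}_{\rm th, sh}$ cannot be produced.
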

\begin{proof}
Rewriting the SIR CCDF with the arbitrary shadowing assumption, 
\begin{align} \label{sir_ccdf_shadowing_pre}
&{\mathtt{P}}\left(R_{\rm th}, \theta, \lambda, \lambda_{\rm u}, \beta \right)  \nonumber \\
&= \mathbb{P}\left[ \CMcal{A} \ne \emptyset \right] \! \mathbb{P}\left[\left. \frac{\left\| {\bf{d}}_{\rm s} \right\|^{-\beta} H_{{\rm s},1}S_{{\rm s}, 1}}{\sum_{{\bf{u}}_i \in \Phi_{\rm u}\backslash {\bf{u}}_1 }  \! \left\| {\bf{d}}_{\rm s} - {\bf{u}}_i \right\|^{-\beta} \! H_{{\rm s}, i} S_{{\rm s}, i}} \! > \! \theta \right| \! \CMcal{A} \neq \emptyset \! \right],
\end{align}
where $S_{j,i}$ for $j,i \in \mathbb{N}$ is a shadowing coefficient between RRH $j$ and user $i$, while the index ${\rm s}$ means the index of the selected RRH. For instance, $S_{{\rm s}, i}$ means a shadowing coefficient between the selected RRH and the user $i$. Subsequently we have
\begin{align} \label{sir_ccdf_shadowing_late}
\eqref{sir_ccdf_shadowing_pre}  &=  \mathbb{P}\left[ \CMcal{A} \ne \emptyset \right] \cdot \nonumber \\
&\;\;\;\; \mathbb{E}\left[ \mathbb{E}\left[\left. e^{ - \left\| {\bf{d}}_{\rm s} \right\|^{\beta} \theta \sum_{{\bf{u}}_i \in \Phi_{\rm u}  \backslash {\bf{u}}_1 }   \left\| {\bf{d}}_{\rm s} - {\bf{u}}_i \right\| ^{-\beta} H_{{\rm s}, i} S_{{\rm s}, i}/S_{{\rm s},1}}\right| S_{\rm s,1} \right] \right]
\nonumber \\
&\mathop \ge \limits^{(a)} \left(1 - e^{-\lambda \pi R_{\rm th}^2} \right) \cdot \nonumber \\
&\;\;\;\;    \mathbb{E}\left[e^{ - \left\| {\bf{d}}_{\rm s} \right\|^{\beta} \theta \sum_{{\bf{u}}_i \in \Phi_{\rm u} \backslash {\bf {u}}_1}   \left\| {\bf{d}}_{\rm s} - {\bf{u}}_i \right\| ^{-\beta} H_{{\rm s}, i} S_{{\rm s}, i} \mathbb{E}\left[1/{S_{{\rm s}, 1}} \right] } \right]  \nonumber \\
&\mathop = \limits^{(b)} \left(1 - e^{-\lambda \pi R_{\rm th}^2} \right) \frac{\left( 1 - e^{-\pi \lambda_{\rm u} \theta^{2/\beta} \frac{E_S}{{\rm sinc} \left(2/\beta \right)}R_{\rm th}^2  }\right)}{\pi \lambda_{\rm u} \theta^{2/\beta} \frac{E_S}{{\rm sinc} \left(2/\beta \right)} R_{\rm th}^2},
\end{align}
where (a) comes from Jensen's inequality and (b) follows that the Laplace functional of $I' =  \sum_{{\bf{u}}_i \in \Phi_{\rm u} \backslash {\bf{u}}_1 }   \left\|  {\bf{u}}_i \right\| ^{-\beta} H_{{\rm s}, i} S_{{\rm s}, i} \mathbb{E}\left[ 1/{S_{{\rm s}, 1}} \right]$ given  as 
\begin{align} \label{laplace_int_shadowing}
\CMcal{L}_{I'}\left(s \right) =  \exp\left(-\pi \lambda_{\rm u} s^{2/\beta}
{\mathbb{E}\left[{S^{2/\beta}} \right]} \mathbb{E}\left[\frac{1}{S}\right]^{2/\beta} \frac{1}{{\rm sinc}\left(2/\beta \right)} \right),
\end{align}
with $S$ is a generic shadowing coefficient.
Defining $E_S = {\mathbb{E}\left[{S^{2/\beta}} \right]} \mathbb{E}\left[\frac{1}{S}\right]^{2/\beta}$, we obtain \eqref{sir_ccdf_shadowing_late}.
Now, we first obtain an approximate optimum distance threshold $R_{\rm th}$ that maximizes a lower bound on the SIR CCDF with a shadowing assumption \eqref{sir_ccdf_shadowing_late}. This can be calculated directly by using Corollary \ref{coro_opt}. 
\begin{align} \label{dist_thres_opt_shadowing}
\tilde{R}_{\rm th, sh}^{\star} = \left(\frac{1}{\pi^2 \lambda  \lambda_{\rm u}  \theta^{2/\beta}\frac{E_S}{\rm sinc\left(2/\beta \right)}}\right)^{\frac{1}{4}}.
\end{align}
Since this is a distance threshold, we now obtain a received power threshold that provides the equivalent performance with the obtained $\tilde{R}_{\rm th, sh}^{\star}$ \eqref{dist_thres_opt_shadowing}. 
${\mathtt{P}}\left(R_{\rm th}, \theta, \lambda, \lambda_{\rm u}, \beta \right)$ can be represented as
\begin{align} \label{p_represent_exp}
{\mathtt{P}}\left(R_{\rm th}, \theta, \lambda, \lambda_{\rm u}, \beta \right) =  \left(1 - e^{- M'} \right) \frac{\left( 1 - e^{- \frac{\lambda_{\rm u}}{\lambda} \theta^{2/\beta} \frac{E_S}{{\rm sinc} \left(2/\beta \right)}M'  }\right)}{\frac{ \lambda_{\rm u}}{\lambda} \theta^{2/\beta} \frac{E_S}{{\rm sinc} \left(2/\beta \right)} M'},
\end{align}
where $M' = \lambda \pi R_{\rm th}^2$ is the average number of the selected RRHs in the distributed selection phase when the distance threshold is $R_{\rm th}$. 
From \eqref{p_represent_exp}, it is reasonable to interpret that the SIR coverage performance is determined by the average number of the selected RRHs in the distributed selection phase, i.e., $M'$. With $\tilde{R}_{\rm th, sh}^{\star}$, the average number of the selected RRHs in the distributed selection phase is characterized as
\begin{align}
\lambda \pi \left(\tilde  R^{\star}_{\rm th, sh}\right) ^2 &= \lambda \int_{\mathbb{R}^2} \mathbb{P}\left[x < \tilde  R^{\star}_{\rm th, sh} \right] {\rm d}x \nonumber \\
&= \lambda \int_{\mathbb{R}^2} \mathbb{P}\left[ \left( \tilde  R^{\star}_{\rm th, sh}\right)^{-\beta} < x^{-\beta} \right] {\rm d}x.
\end{align}
Now, rewriting \eqref{lem_measure}, 
\begin{align}
\Lambda\left(\left[0, t\right) \right) &= \lambda \int_{\mathbb{R}^2} \mathbb{P} \left[ \frac{x^{\beta}}{S} \in \left[0,t \right) \right]{\rm d}x \nonumber \\
&\mathop = \limits^{(a)} \lambda \int_{\mathbb{R}^2} \mathbb{P} \left[ \frac{1}{t} < {S}{x^{-\beta}} \right]{\rm d}x \label{coro2_p_thres} \\
&\mathop = \limits^{(b)} \lambda \pi \mathbb{E}\left[S^{\frac{2}{\beta}} \right] t^{\frac{2}{\beta}} \label{coro2_t},
\end{align}
where (a) follows the non-negativity of the received power and (b) follows Lemma \ref{lem_lattice}. From \eqref{coro2_p_thres}, we find that the intensity measure $\Lambda\left(\left[0,t \right) \right)$ is the average number of the selected RRHs in the distributed selection phase when the received power threshold is 
\begin{align}
P_{\rm th} = \frac{1}{t}.
\end{align}
To have the same SIR coverage performance with the case where the proposed algorithm uses an approximate optimum distance threshold $\tilde R^{\star}_{\rm th, sh}$, the received power threshold $1/t$ should satisfy
\begin{align}
\lambda \pi \left(\tilde  R^{\star}_{\rm th, sh}\right) ^2 = \lambda \pi \mathbb{E}\left[S^{\frac{2}{\beta}} \right] t^{\frac{2}{\beta}},
\end{align}
which provides
\begin{align}
t = \frac{\left({\tilde R^{\star}_{\rm th, sh}}\right)^{\beta}}{\mathbb{E}\left[S^{\frac{2}{\beta}} \right]^{\frac{\beta}{2}}}.
\end{align}
This completes the proof.
\end{proof}


\begin{remark}
\normalfont
When $S=1$ (no shadowing assumption), $\tilde R^{\star}_{\rm th, sh} = \tilde R^{\star}_{\rm th}$ and $\tilde P^{\star}_{\rm th}$ boils down to
\begin{align}
\tilde P^{\star}_{\rm th} = \left({\tilde R^{\star}_{\rm th}}\right)^{-\beta},
\end{align}
which is equivalent with the pathloss of the distance threshold ${\tilde R^{\star}_{\rm th}}$.
For this reason, using ${\tilde R^{\star}_{\rm th}}$ and $\tilde P^{\star}_{\rm th}$ provide the equivalent SIR coverage performance in this case. 
\end{remark}


\subsection{Multiple RRHs Selection}
In this subsection, we assume that a user selects multiple RRHs to improve the SIR performance. Under this assumption, we find an appropriate selection threshold that improves the SIR coverage probability for a case of $L>1$. For simplicity, we assume that $L \le M$.
When $L$ RRHs are selected in the RRH selection switch, the BBU uses maximum ratio combining (MRC) technique to boost the desired signal power. Denoting the indices of the selected RRHs as ${\rm s}_1, ..., {\rm s}_{L}$, the SIR CCDF is
\begin{align} \label{sir_multi_rrh}
{\mathtt{P}}_{m}\left(\theta, \lambda, \lambda_{\rm u}, \beta \right) = \mathbb{P}\left[ \frac{ \left| \sum_{\ell =1}^{L} \left\| {\bf{d}}_{\rm s_{\ell}} \right\|^{-\beta/2} {H_{\rm s_{\ell},1}} \right|^2}{ \sum_{\ell=1}^{L}  I_{\ell}} > \theta \right],
\end{align}
where $I_{{\ell}} = {\sum_{{\bf{u}}_i \in \Phi_{\rm u}\backslash {\bf{u}}_1 }   \left\| {\bf{d}}_{\rm s_{\ell}} - {\bf{u}}_i \right\|^{-\beta} H_{{\rm s}_{\ell}, i} H_{{\rm s}_{\ell}, 1} } $. Note that ${\bf{d}}_{{\rm s}_{\ell}}$ for $\ell = 1,...,L$ is the location of the selected RRH. Unlike the previous case where only one RRH is selected, however, it is not straightforward to compute the SIR CCDF. 
For this reason, instead of exact characterization, we provide an approximation of  \eqref{sir_multi_rrh}.

Using the proposed algorithm, the distribution of $\left\|{\bf{d}}_{{\rm s}_{\ell}} \right\|$ for $\ell = 1,...,L$ is identical since there is no dependency in locations when selecting RRHs. For this reason, in an average sense, MRC would provide a $L$-fold array gain to the desired signal. The aggregated interference power $I_{\ell}$ for $\ell = 1,...,L$ is also identically distributed due to the stationarity of a homogeneous PPP.
By approximating identically distributed random variables with the same random variables, we get the following expression.
\begin{align} \label{sir_ccdf_multiple_approx}
{\mathtt{P}}_{m}\left(\theta, \lambda, \lambda_{\rm u}, \beta \right) &\approx \tilde {\mathtt{P}}_{m}\left(\theta, \lambda, \lambda_{\rm u}, \beta \right) \nonumber \\
&= \mathbb{P}\left[ \frac{L  \left\| {\bf{d}}_{\rm s_{1}} \right\|^{-\beta} H_{\rm s_{1},1} }{ I_{1}} > \theta \right]
\end{align}
The approximated SIR CCDF \eqref{sir_ccdf_multiple_approx} is equivalent with that of the single RRH selection case, when $L$-fold array gain is provided to the desired signal power. This is an equivalent benefit as reducing the target SIR $\theta$ to $\theta/L$.
Modifying the obtained selection threshold $\tilde R_{\rm th}^{\star}$ \eqref{opt_R} with $\theta/L$, we have
\begin{align} \label{opt_r_multi}
\tilde R_{\rm th, multi}^{\star} = \left(\frac{L^{2/\beta}}{\pi^2 \lambda  \lambda_{\rm u}  \theta^{2/\beta}\frac{1}{\rm sinc\left(2/\beta \right)}}\right)^{\frac{1}{4}}.
\end{align}
In \eqref{opt_r_multi}, we observe that when the number of selected RRHs increases, the selection threshold also increases by $L^{\frac{1}{2\beta}}$. The explanation of this observation is as follows: when $ R_{\rm th}$ is too large, the outage occurs mainly because the selected RRH is located too far from a user, so that the pathloss is too large. When a user can select multiple RRHs, however, a user has other chances to select different RRHs, probably located more closer to the user. For this reason, the selection threshold becomes larger when the number of selected RRHs increases.

\section{Conclusions}
In this paper, we proposed a low complexity RRH selection algorithm for a low target rate user in a dense C-RAN. 
By using the two separate phases, each of which performs the distributed selection and the random selection, the algorithm complexity is kept constant, and does not depend on the RRH density. 
For the performance analysis, we modeled a network by a homogenous PPP. By using tools of stochastic geometry, we derived the SIR CCDF of the proposed algorithm. From the obtained SIR CCDF expression, we obtained the approximate optimum selection threshold $\tilde R^{\star}_{\rm th}$ that maximizes the SIR CCDF of the proposed algorithm. The simulation results demonstrates that the obtained selection threshold provided the performance close to the optimum SIR coverage performance obtained numerically. 
We also revealed a condition that the relative performance loss coming from the random selection vanishes in an asymptotic regime.
Generalizing the algorithm, the obtained $\tilde R^{\star}_{\rm th}$ was modified to $\tilde P^{\star}_{\rm th}$ or $\tilde R^{\star}_{\rm th, multi}$ for a case where received power is used as a selection threshold or multiple RRHs are selected in the algorithm.

The key feature of the proposed algorithm is that it has complexity independent to the RRH density, so that the RRH selection switch can keep its complexity reasonable irrespective of the RRH density. Due to this, the performance loss is inevitable using the proposed algorithm. 
As the RRH density increases, however, this performance loss becomes negligible.
Future work could be directed to incorporate more advanced cooperation algorithm with the proposed RRH selection. 

\bibliographystyle{IEEEtran}
\bibliography{ref}

\begin{thebibliography}{10}
\providecommand{\url}[1]{#1}
\csname url@samestyle\endcsname
\providecommand{\newblock}{\relax}
\providecommand{\bibinfo}[2]{#2}
\providecommand{\BIBentrySTDinterwordspacing}{\spaceskip=0pt\relax}
\providecommand{\BIBentryALTinterwordstretchfactor}{4}
\providecommand{\BIBentryALTinterwordspacing}{\spaceskip=\fontdimen2\font plus
\BIBentryALTinterwordstretchfactor\fontdimen3\font minus
  \fontdimen4\font\relax}
\providecommand{\BIBforeignlanguage}[2]{{%
\expandafter\ifx\csname l@#1\endcsname\relax
\typeout{** WARNING: IEEEtran.bst: No hyphenation pattern has been}%
\typeout{** loaded for the language `#1'. Using the pattern for}%
\typeout{** the default language instead.}%
\else
\language=\csname l@#1\endcsname
\fi
#2}}
\providecommand{\BIBdecl}{\relax}
\BIBdecl

\bibitem{cran_whitep}
{China Mobile Research Institute}, ``{C-RAN: The road toward Green RAN},''
  \emph{{White Paper}}, Oct. 2011.

\bibitem{rost:commac}
P.~Rost, C.~Bernardos, A.~Domenico, M.~Girolamo, M.~Lalam, A.~Maeder,
  D.~Sabella, and D.~Wubben, ``Cloud technologies for flexible 5g radio access
  networks,'' \emph{IEEE Comm. Mag.}, vol.~52, no.~5, pp. 68--76, May 2014.

\bibitem{valenti:2014:gc}
P.~Rost, S.~Talarico, and M.~C. Valenti, ``The complexity-rate tradeoff of
  centralized radio access networks,'' \emph{IEEE Trans. Wireless Comm.},
  vol.~14, no.~11, pp. 6164--6176, Nov. 2015.

\bibitem{Sundaresan:2013}
K.~Sundaresan, M.~Y. Arslan, S.~Singh, S.~Rangarajan, and S.~V. Krishnamurthy,
  ``{FluidNet: A flexible cloud-based radio access network for small cells},''
  in \emph{Proc. ACM Int. Conf. on Mobile Computing and Networking (MOBICOM)},
  2013, pp. 99--110.

\bibitem{megen:twc}
M.~Peng, Y.~Li, J.~Jiang, J.~Li, and C.~Wang, ``Heterogeneous cloud radio
  access networks: a new perspective for enhancing spectral and energy
  efficiencies,'' \emph{IEEE Trans. Wireless Comm.}, vol.~21, no.~6, pp.
  126--135, Dec. 2014.

\bibitem{jun:2014}
Y.~Shi, J.~Zhang, B.~O'Donoghue, and K.~B. Letaief, ``Large-scale convex
  optimization for dense wireless cooperative networks,'' vol.~63, no.~18, pp.
  4729--4743, Sep. 2015.

\bibitem{liu:2013}
A.~Liu and V.~Lau, ``Joint power and antenna selection optimization in large
  cloud radio access networks,'' \emph{IEEE Trans. Sig. Proc.}, vol.~62, no.~5,
  pp. 1319--1328, Mar. 2014.

\bibitem{shi:2014}
Y.~Shi, J.~Zhang, and K.~Letaief, ``{Group sparse beamforming for green
  cloud-RAN},'' \emph{IEEE Trans. Wireless Comm.}, vol.~13, no.~5, pp.
  2809--2823, May 2014.

\bibitem{fan:2014}
C.~Fan, Y.~J. Zhang, and X.~Yuan, ``Dynamic nested clustering for parallel
  phy-layer processing in cloud-rans,'' \emph{IEEE Trans. Wireless Comm.},
  vol.~15, no.~3, pp. 1881--1894, Mar. 2016.

\bibitem{joung:2013}
J.~Joung, Y.~K. Chia, and S.~Sun, ``Energy-efficient, large-scale
  distributed-antenna system {(L-DAS)} for multiple users,'' \emph{IEEE Jour.
  Select. Topics in Sig. Proc.}, vol.~8, no.~5, pp. 954--965, Oct. 2014.

\bibitem{joung:2013gc}
------, ``Energy efficient multiuser {MIMO} systems with distributed
  transmitters,'' in \emph{Proc. IEEE Glob. Comm. Conf.}, Dec. 2013, pp.
  2491--2496.

\bibitem{joung:2013lett}
J.~Joung and S.~Sun, ``{Energy efficient power control for distributed
  transmitters with ZF-based multiuser MIMO precoding},'' \emph{IEEE Comm.
  Lett.}, vol.~17, no.~9, pp. 1766--1769, Sep. 2013.

\bibitem{wan:07}
W.~Choi and J.~Andrews, ``Downlink performance and capacity of distributed
  antenna systems in a multicell environment,'' \emph{IEEE Trans. Wireless
  Comm.}, vol.~6, no.~1, pp. 69--73, Jan. 2007.

\bibitem{mugen:14}
M.~Peng, S.~Yan, and H.~Poor, ``Ergodic capacity analysis of remote radio head
  associations in cloud radio access networks,'' \emph{IEEE Wireless Comm.
  Lett.}, vol.~3, no.~4, pp. 365--368, Aug. 2014.

\bibitem{NY:dynamic}
N.~Lee, D.~Morales-Jimenez, A.~Lozano, and R.~Heath, ``Spectral efficiency of
  dynamic coordinated beamforming: A stochastic geometry approach,'' \emph{IEEE
  Trans. Wireless Comm.}, vol.~14, no.~1, pp. 230--241, Jan. 2015.

\bibitem{junzhang:dynamic}
C.~Li, J.~Zhang, M.~Haenggi, and K.~Letaief, ``User-centric intercell
  interference nulling for downlink small cell networks,'' \emph{IEEE Trans.
  Comm.}, vol.~63, no.~4, pp. 1419--1431, Apr. 2015.

\bibitem{zhang:14:spawc}
Y.~Zhang and Y.~J. Zhang, ``User-centric virtual cell design for cloud radio
  access networks,'' in \emph{Proc. IEEE Workshop on Sign. Proc. Adv. in
  Wireless Comm.}, Jun. 2014, pp. 249--253.

\bibitem{Tanbourgi:2014}
R.~Tanbourgi, S.~Singh, J.~Andrews, and F.~Jondral, ``A tractable model for
  noncoherent joint-transmission base station cooperation,'' \emph{IEEE Trans.
  Wireless Comm.}, vol.~13, no.~9, pp. 4959--4973, Sep. 2014.

\bibitem{7248946}
J.~Park, N.~Lee, and R.~Heath, ``Performance analysis of pair-wise dynamic
  multi-user joint transmission,'' in \emph{Proc. IEEE Int. Conf. on Comm.},
  Jun. 2015, pp. 3981--3986.

\bibitem{6909064}
F.~Baccelli and A.~Giovanidis, ``A stochastic geometry framework for analyzing
  pairwise-cooperative cellular networks,'' \emph{IEEE Trans. Wireless Comm.},
  vol.~14, no.~2, pp. 794--808, Feb. 2015.

\bibitem{6928420}
G.~Nigam, P.~Minero, and M.~Haenggi, ``Coordinated multipoint joint
  transmission in heterogeneous networks,'' \emph{IEEE Trans. Comm.}, vol.~62,
  no.~11, pp. 4134--4146, Nov. 2014.

\bibitem{6879305}
A.~H.~Sakr and E.~Hossain, ``Location-aware cross-tier coordinated multipoint
  transmission in two-tier cellular networks,'' \emph{IEEE Trans. Wireless
  Comm.}, vol.~13, no.~11, pp. 6311--6325, Nov. 2014.

\bibitem{7067349}
G.~Nigam, P.~Minero, and M.~Haenggi, ``Spatiotemporal cooperation in
  heterogeneous cellular networks,'' \emph{IEEE Jour. Select. Areas in Comm.},
  vol.~33, no.~6, pp. 1253--1265, Jun. 2015.

\bibitem{7308016}
J.~Park, N.~Lee, and R.~W. Heath, ``Cooperative base station coloring for
  pair-wise multi-cell coordination,'' \emph{IEEE Trans. Comm.}, vol.~64,
  no.~1, pp. 402--415, Jan. 2016.

\bibitem{ghosh:lte}
A.~Ghosh, J.~Zhang, J.~G. Andrews, and R.~Muhamed, \emph{Fundamentals of
  LTE}.\hskip 1em plus 0.5em minus 0.4em\relax Upper Saddle River, NJ, USA:
  Prentice Hall Press, 2010.

\bibitem{geng:tin}
C.~Geng, N.~Naderializadeh, A.~Avestimehr, and S.~Jafar, ``On the optimality of
  treating interference as noise,'' \emph{IEEE Trans. Info. Th.}, vol.~61,
  no.~4, pp. 1753--1767, April 2015.

\bibitem{stoyan:book}
D.~Stoyan, W.~S. Kendall, and J.~Mecke, \emph{Stochastic geometry and its
  applications}, ser. {Wiley series in probability and mathematical
  statisitics}, 1987.

\bibitem{baccelli:06}
F.~Baccelli, B.~Blaszczyszyn, and P.~Muhlethaler, ``An aloha protocol for
  multihop mobile wireless networks,'' \emph{IEEE Trans. Info. Th.}, vol.~52,
  no.~2, pp. 421--436, Feb. 2006.

\bibitem{andrew:10}
J.~Andrews, F.~Baccelli, and R.~Ganti, ``A tractable approach to coverage and
  rate in cellular networks,'' \emph{IEEE Trans. Comm.}, vol.~59, no.~11, pp.
  3122--3134, Nov. 2011.

\bibitem{Blas:2013}
B.~Blaszczyszyn, M.~Karray, and H.~Keeler, ``Using poisson processes to model
  lattice cellular networks,'' in \emph{Proc. IEEE Int. Conf. on Computer Comm.
  (INFOCOM)}, Apr. 2013, pp. 773--781.

\end{thebibliography}

\end{document}